\documentclass[final, 10pt, letterpaper]{IEEEtran}
\usepackage[T1]{fontenc}
\usepackage{url}
\ifCLASSINFOpdf
\else
\fi

\usepackage{amsmath}
\interdisplaylinepenalty=2500
\usepackage{algorithmicx}
\usepackage{algorithm}
\usepackage{graphicx}
\usepackage{mathtools}
\usepackage{amsthm}
\usepackage{amssymb}
\usepackage{algpseudocode}
\usepackage{nccmath,textcomp}
\usepackage{eso-pic}
\usepackage{bbm}
\newtheorem{theorem}{Theorem}
\newtheorem{lemma}{Lemma}
\newtheorem{remark}{Remark}
\newtheorem{comment}{Comment}

\newtheorem{proposition}{Proposition}
\newtheorem{assumption}{Assumption}
\newtheorem{corollary}{Corollary}
\newtheorem*{proof-mine}{Proof}
\usepackage{verbatim}

\hyphenation{op-tical net-works semi-conduc-tor}
\makeatletter
\let\MYcaption\@makecaption
\makeatother




\usepackage[font=footnotesize]{subcaption}

\makeatletter
\let\@makecaption\MYcaption
\makeatother
\newcommand\blfootnote[1]{%
  \begingroup
  \renewcommand\thefootnote{}\footnote{#1}%
  \addtocounter{footnote}{-1}%
  \endgroup
}
\pagestyle{empty}
\begin{document}
\title{Analysis of Worst-Case Interference in Underlay Radar-Massive MIMO Spectrum Sharing Scenarios}
\author{Raghunandan M. Rao, Harpreet S. Dhillon, Vuk Marojevic, and Jeffrey H. Reed
\\[-10ex]}
\maketitle
\thispagestyle{empty}
\begin{abstract}
\begin{comment}
In this paper, we consider an underlay radar-massive MIMO spectrum sharing scenario. We model the massive MIMO base station locations using a Poisson point process (PPP), which are allowed to operate outside a \textit{circular exclusion zone} centered at the radar. In this scenario, we study the average interference at the radar due to elevation/3D beamforming in the massive MIMO downlink. 
We devise a novel construction based on the circumradius distribution of a typical Poisson-Voronoi cell, to derive an upper bound on the average interference. 
We also provide a simpler construction based on modeling each cell as a circle with area equal to the average area of the typical cell, and derive tight approximations for these cell-shape models. Finally, we show that the gap in interference power under the two models is approximately constant w.r.t. the exclusion zone radius. Our analysis reveals useful trends in average interference power, as a function of key deployment parameters such as radar/base station (BS) antenna heights, number of elements, BS density and exclusion zone radius.
\end{comment} 
In this paper, we consider an underlay radar-massive MIMO spectrum sharing scenario in which massive MIMO base stations (BSs) are allowed to operate outside a circular exclusion zone centered at the radar. Modeling the locations of the massive MIMO BSs as a homogeneous Poisson point process (PPP), we derive an analytical expression for a tight upper bound on the average interference at the radar due to cellular transmissions. The technical novelty is in bounding the worst-case elevation angle for each massive MIMO BS for which we devise a novel construction based on the circumradius distribution of a typical Poisson-Voronoi (PV) cell. While these worst-case elevation angles are correlated for neighboring BSs due to the structure of the PV tessellation, it does not explicitly appear in our analysis because of our focus on the {\em average} interference. We also provide an estimate of the nominal average interference by approximating each cell as a circle with area equal to the average area of the typical cell. Using these results, we demonstrate that the gap between the two results remains approximately constant with respect to the exclusion zone radius. Our analysis reveals useful trends in average interference power, as a function of key deployment parameters such as radar/BS antenna heights, number of antenna elements per radar/BS, BS density, and exclusion zone radius.
\end{abstract}
\blfootnote{
R. M. Rao, H. S. Dhillon and J. H. Reed are with Wireless@VT, Bradley Department of ECE, Virginia Tech, Blacksburg, VA, 24061, USA (e-mail: \{raghumr,hdhillon,reedjh\}@vt.edu). \\
\indent V. Marojevic is with the Department of ECE at Mississippi State University, Mississippi State, MS, 39762, USA (e-mail: vuk.marojevic@ece.msstate.edu). \\
\indent The support of the U.S. NSF Grants CNS-1564148, CNS-1642873, and ECCS-1731711 is gratefully acknowledged.}
\begin{IEEEkeywords}
Stochastic geometry, radar-massive MIMO coexistence, 3D beamforming, Rician channels, exclusion zones, average interference.
\end{IEEEkeywords}

\IEEEpeerreviewmaketitle
\section{Introduction}
Spectrum sharing and massive MIMO are two key spectral efficiency enhancing techniques that have been included in the Third Generation Partnership Project (3GPP) Release 15 specifications. While massive MIMO enhances spectral efficiency by increasing the dimension of spatial multiplexing by an order of magnitude, spectrum sharing improves it by sharing spectrum between different wireless technologies in the spatial and temporal dimensions. Spectrum sharing is particularly attractive in the sub-6 GHz frequency bands, where spectrum is under-utilized due to conservative policies \cite{Sudeep_Abid_DynExcZone_DySPAN_2014}. Among the various incumbents, radars are the biggest consumer of spectrum in the sub-6 GHz bands. In underlay radar-cellular spectrum sharing scenarios where the establishment of an exclusion zone limits cellular interference to the radar, coordination is often impossible due to security concerns, or unfeasible due to practical limitations.
The lack of coordination can potentially exacerbate the interference due to receive and transmit beamforming capabilities of the radar and BS, respectively. Therefore, it is important to understand the worst-case interference at the radar as a function of key deployment parameters in such scenarios, which is the main focus of this paper. 

\subsubsection*{Related Work}
Multi-antenna techniques have been well-explored in the radar-communications coexistence literature. In the case of coordination between the primary and secondary users, MIMO techniques have been investigated in the context of spectrum sharing between a MIMO radar and the MU-MIMO downlink \cite{Liu_Robust_MIMO_BF_Rad_Cell_Coexist_2017}, MIMO radar and full-duplex cellular systems \cite{Biswas_FDMIMO_radar_coexist_TWC_2018}, and MIMO radar and a MIMO communication system \cite{Li_MIMOMC_Radar_TSP_2016}, under performance and power constraints. 
Even though secondary user interference mitigation is possible using multi-antenna radars in uncoordinated scenarios \cite{Deng_Himed_InterfMit_TAES_2013}, its feasibility in the presence of a \textit{large multi-cell network of massive MIMO BSs} is limited to scenarios of sparse deployments and/or large exclusion zone radii. 

Owing to its tractability, tools from stochastic geometry have been used recently to analyze spectrum sharing systems \cite{Li_Baccelli_Andrews_LTE_WiFi_TWC_2016, Parida_Dhillon_CBRS_Access_2017}. Authors in \cite{Hessar_Roy_Radar_WiFi_TAES_2016} considered a radar-WiFi spectrum sharing scenario, where WiFi access points (APs) were modeled as a homogeneous PPP. The exclusion zone radius was computed for different scenarios based on side-information available at the APs. In \cite{Kim_WiFi_Radar_WCL_2017}, the authors evaluated the mean aggregate interference from Wi-Fi APs to radar using tools from stochastic geometry. However, these works consider azimuth-only beamforming, and do not model the impact of elevation beamforming, which is a prominent feature introduced in 5G NR. While \cite{Baianifar_ant_height_mMIMO_PIMRC_2017, Yang_BS_Downtilt_DL_Cellular_TWC_2019} considered the elevation angle, the focus of these works is on antenna height optimization and interference mitigation in cellular networks. 

\subsubsection*{Contributions}
In this work, we develop a novel and tractable analytical framework to analyze the \textit{average interference power} in radar-massive MIMO spectrum sharing scenarios, which is a key metric that has been used in drafting spectrum sharing policies in recent years \cite{FCC_3point5_GHz_Rules}. Incorporating elevation beamforming into the stochastic geometry framework is challenging, since Voronoi cells of the BSs can be arbitrarily large. 
To overcome this, we devise a novel formulation based on the circumradius distribution of the Voronoi cell \cite{Calka_Circumradius_CCDF_2001}. In addition, the presence of sidelobes result in a beamforming gain that is a non-monotonic function of the elevation angle. We derive an upper bound on the beamforming gain that monotonically decreases with the elevation angle, which is crucial to deriving the upper bound on the average interference. We also derive the nominal average interference power by modeling each Voronoi cell as a circle of area equal to the average area of a typical cell. Finally, we provide approximations, that lead to the development of intuitive system design insights regarding the worst-case exclusion zone radius, scaling laws, and the difference between the worst-case and nominal average interference values.
\section{System Model} \label{Sec_Sys_Model}
We consider the radar-massive MIMO spectrum sharing scenario shown in Fig. \ref{Fig1_Radar_mMIMO_SpecShare_Illustration}. The radar is the primary user (PU), equipped with a $N^{(rad)}_{az} \times N^{(rad)}_{el}$ uniform rectangular array (URA) with $\tfrac{\lambda}{2}$-spacing, mounted at a height of $h_{rad}$ m. The massive MIMO downlink is the secondary user (SU), with each BS serving $K$ users with equal power allocation using multi-user MIMO (MU-MIMO). Each BS is equipped with a $N^{(BS)}_{az} \times N^{(BS)}_{el}$ URA with $\tfrac{\lambda}{2}$-spacing, mounted at a height of $h_{BS}$ m. The subscripts $az$ ($el$) are used to denote the azimuth (elevation) elements respectively, and superscripts $rad$ ($BS$) denote the radar (BS) antenna elements respectively. The radar is protected from SU interference by a \textit{circular exclusion zone} of radius $r_\mathtt{exc}$. The exclusion zone is chosen to be circular since there is no coordination between the cellular network and the radar system, and the radar is assumed to search for a target uniformly at random in the azimuth $[-\tfrac{\pi}{2}, \tfrac{\pi}{2})$, as shown in Fig. \ref{Fig1_Radar_mMIMO_SpecShare_Illustration}.\\[-3ex]

\begin{figure}[t]
	\centering
	\includegraphics[width=3.3in]{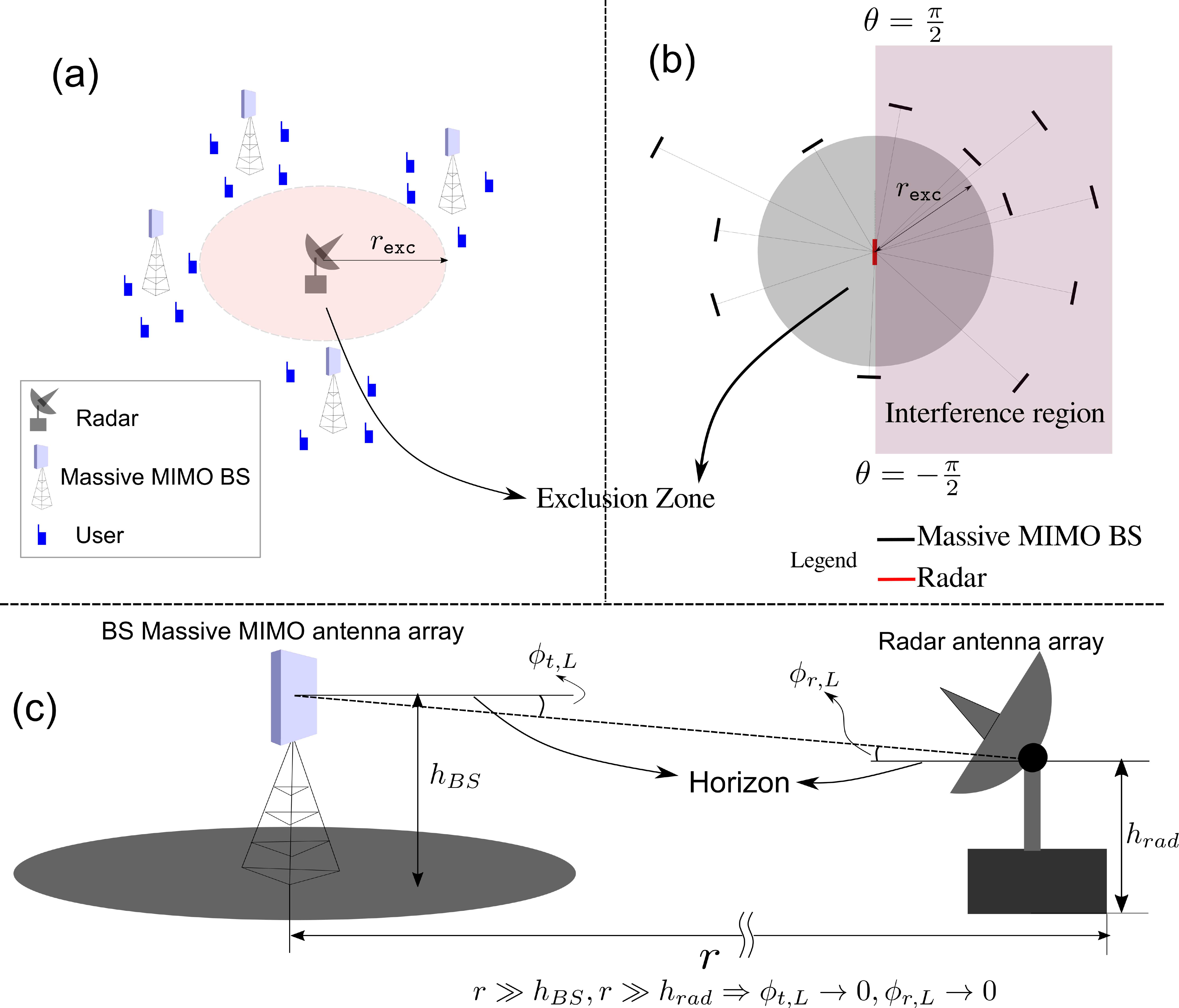}\\
	[-1ex]
	\caption{Illustration of the radar-massive MIMO spectrum sharing scenario, (a) the radar is protected from massive MIMO downlink interference by an exclusion zone of radius $r_\mathtt{exc}$, (b) Top View: the boresight of each BS is aligned along the direction of the radar, and the radar receives interference from the azimuth $\big[\tfrac{-\pi}{2}, \tfrac{\pi}{2} \big)$ depicted by the shaded region, (c) the line of sight component has elevation angle of departure ($\theta_{t,L}$) and arrival ($\theta_{r,L}$) close to $0^\circ$, i.e. the horizon. In our convention, $-\pi/2 \leq \phi < 0^\circ$ for elevation angles above the horizon, and $0 < \phi \leq \pi/2$ for elevation angles below the horizon. \\
	[-6ex]}
	\label{Fig1_Radar_mMIMO_SpecShare_Illustration}
\end{figure}

\subsection{Channel Model}
In quasi-stationary channel conditions, the spatial channel between each BS and the radar is given by \cite{3GPP5GNR_ChanModels}
\begin{align}
	\label{Doub_dir_chan_model_f}
	\mathbf{H_R}(f) = & \sqrt{\tfrac{\beta(d)}{1 + K_R}} \Big( \sqrt{K_R} \mathbf{a} (\theta_{t,L}, \phi_{t,L}) \mathbf{a}^H (\theta_{r,L}, \phi_{r,L}) + \nonumber \\
	& \sqrt{\tfrac{1}{N_c}} \sum\limits_{i=1}^{N_c} \gamma_i \mathbf{a} (\theta_{t,i}, \phi_{t,i}) \mathbf{a}^H (\theta_{r,i}, \phi_{r, i}) \Big),  
\end{align}
where $\beta(d)= PL(r_0) d^{-\alpha}$ is the path loss, $PL(r_0)$ is the path-loss at reference distance $r_0$, $\alpha$ is the path-loss exponent ($\alpha > 2$), $d$ is the 3D distance between the BS and the radar, and $N_c$ is the number of discrete multipath components (MPCs). The Rician factor $K_R\ggg 1$, where propagation is dominated by the line of sight component\footnote{Such propagation scenarios are observed in (a) coastal deployments (where the terrestrial BSs is sharing spectrum with a naval radar), and (b) terrestrial deployments in flat rural/suburban terrain (terrestrial BSs sharing spectrum with terrestrial radar systems).}. In addition, the random small-scale fading amplitude satisfies $\mathbb{E}[\gamma_i] = 0$ and $\mathbb{E}[|\gamma_i|^2] = 1$. 
The azimuth and elevation angles of arrival (departure) of the $i^{th}$ MPC at the radar (from the BS) is denoted by $\theta_{r,i}$ ($\theta_{t,i}$) and $\phi_{r,i}$ ($\phi_{t,i}$) respectively. Similarly, the azimuth and elevation angles of departure (arrival) of the LoS component is given by $\theta_{t,L}$ ($\theta_{r,L}$) and $\phi_{t,L}$ ($\phi_{r,L}$) respectively as shown in Fig. \ref{Fig1_Radar_mMIMO_SpecShare_Illustration}. The steering vector $\mathbf{a}(\theta_t, \phi_t) \in \mathbb{C}^{N^{(BS)}_{az} N^{(BS)}_{el}}$ (BS), and $\mathbf{a}(\theta_r, \phi_r) \in \mathbb{C}^{N^{(rad)}_{az} N^{(rad)}_{el}}$ (radar) is defined in Appendix \ref{App1_Proof_BFGain_UpBound}.\\[-5ex]

\subsection{Massive MIMO Downlink Beamforming Model}
The massive MIMO downlink serves $K$ users located in clusters with mutually disjoint angular support using joint spatial division multiplexing (JSDM) \cite{JSDM_Adhikary_Caire_TIT_2013}. We consider a highly spatially correlated downlink channel, given by the one-ring model \cite{JSDM_Adhikary_Caire_TIT_2013} as $\mathbf{h_i} = \sqrt{\beta_i}\mathbf{U_i} \mathbf{\Lambda}^{1/2}_\mathbf{i} \mathbf{z_i} \in \mathbb{C}^{M}$, where $M = N^{(BS)}_{az} N^{(BS)}_{el} $, $\beta_i$ is the large-scale pathloss for the $i^{th}$ user, $\mathbf{U_i} \in \mathbb{C}^{M \times r}$ is the orthonormal matrix of eigenvectors, $\mathbf{\Lambda_i} \in \mathbb{R}^{r\times r}$ is the diagonal matrix of eigenvalues, and $\mathbf{z_i} \sim \mathcal{CN}(\mathbf{0, I_r}) \in \mathbb{C}^r$ is a complex Gaussian random vector, where $r \ll M$ is the channel rank in the high spatially correlated downlink channel \cite{JSDM_Adhikary_Caire_TIT_2013}. For the sake of simplicity, we consider that all users in the network have the same channel rank. The received signal $\mathbf{y} \in \mathbb{C}^K$ can be written as
\begin{align}
	\label{per_UE_rx_sig}
	\mathbf{y} & = \mathbf{H}^H \mathbf{W_{RF} W_{BB} d}  + \mathbf{n},
\end{align}
where $\mathbf{W_{RF}} = [\mathbf{w_{RF,1}}\ \ \mathbf{w_{RF,2}}\cdots \mathbf{w_{RF,K}}] \in \mathbb{C}^{M\times K}$ is the RF beamformer that groups user clusters with disjoint angular support using nearly orthogonal beams, and $\mathbf{W_{BB}} = [\mathbf{w_{BB,1}}\ \ \mathbf{w_{BB,2}}\cdots \mathbf{w_{BB,K}}] \in \mathbb{C}^{K \times K}$ is the baseband precoder \cite{JSDM_Adhikary_Caire_TIT_2013}. If the azimuth and elevation angular support of the $k^{th}$ user cluster is given by $\Theta_k = [\theta^{(\mathtt{min})}_k, \theta^{(\mathtt{max})}_k]$ and $\Phi_k = [\phi^{(\mathtt{min})}_k, \phi^{(\mathtt{max})}_k]$, then without loss of generality we consider that the RF beamformer is given by $\mathbf{w_{RF,k}} = \tfrac{1}{\sqrt{M}}\mathbf{a}(\theta_k, \phi_k)$, where $\theta_k = (\theta^{(\mathtt{min})}_k + \theta^{(\mathtt{max})}_k)/2$ and $\phi_k = (\phi^{(\mathtt{min})}_k + \phi^{(\mathtt{max})}_k)/2$. The data $\mathbf{d}= [d_1\ d_2\ \cdots\ d_K]^T \in \mathbb{C}^K$, such that $\mathbb{E}[\mathbf{d}]=\mathbf{0}$ and $\mathbb{E}[\mathbf{dd}^H] = \tfrac{P_{BS}}{K} \mathbf{I}$, where $d_k$ is the symbol intended for the $k^{th}$ UE and $P_{BS}$ is the total transmit power per BS. The noise $\mathbf{n} \in \mathbb{C}^{K\times 1}$ is spatially white with $\mathbf{n} \sim \mathcal{CN} (\mathbf{0}, \sigma^2_n \mathbf{I})$.
\begin{proposition}\label{Prop_WBB}
For the massive MIMO BS in the asymptotic regime, the baseband precoding matrix for Zero-Forcing (ZF) and Maximum Ratio Transmission (MRT) can be approximated as $\mathbf{W_{BB}} \approx \mathbf{I}$, when $K$ users from different clusters with mutually disjoint angular support are served. 
\end{proposition}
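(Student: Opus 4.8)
The plan is to show that, once the RF beamformer is applied, the effective $K\times K$ channel seen by the baseband precoder becomes asymptotically diagonal, so that both the ZF and MRT precoders collapse to a diagonal matrix which, after the per-stream power scaling is absorbed into the constraint $\mathbb{E}[\mathbf{dd}^H]=\tfrac{P_{BS}}{K}\mathbf{I}$, is equivalent to the identity. Concretely, I would first form the effective channel $\mathbf{H}_{\mathrm{eff}}:=\mathbf{H}^H\mathbf{W_{RF}}\in\mathbb{C}^{K\times K}$ appearing in \eqref{per_UE_rx_sig}, whose $(k,j)$ entry is $\mathbf{h}_k^H\mathbf{w_{RF,j}}=\tfrac{1}{\sqrt{M}}\sqrt{\beta_k}\,\mathbf{z}_k^H\mathbf{\Lambda}_k^{1/2}\mathbf{U}_k^H\mathbf{a}(\theta_j,\phi_j)$. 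Writing $\mathbf{W_{BB}}=\mathbf{H}_{\mathrm{eff}}^{-1}$ for ZF and $\mathbf{W_{BB}}=\mathbf{H}_{\mathrm{eff}}^H$ for MRT (each up to a normalizing scalar), the proposition reduces to proving that $\mathbf{H}_{\mathrm{eff}}$ is asymptotically diagonal as $M\to\infty$.

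The core step is to show that the off-diagonal entries vanish. For $k\neq j$, the channel $\mathbf{h}_k$ lives in the dominant eigenspace $\mathbf{U}_k$ of the one-ring correlation matrix $\mathbf{R}_k=\mathbb{E}[\mathbf{h}_k\mathbf{h}_k^H]/\beta_k=\mathbf{U}_k\mathbf{\Lambda}_k\mathbf{U}_k^H$, whose range is spanned by the array responses over the angular support $\Theta_k\times\Phi_k$, whereas the beam $\mathbf{w_{RF,j}}$ points at $(\theta_j,\phi_j)\in\Theta_j\times\Phi_j$, which is disjoint from cluster $k$'s support. I would bound the off-diagonal energy in expectation using $\mathbb{E}\big[|[\mathbf{H}_{\mathrm{eff}}]_{k,j}|^2\big]=\tfrac{\beta_k}{M}\,\mathbf{a}(\theta_j,\phi_j)^H\mathbf{R}_k\mathbf{a}(\theta_j,\phi_j)$, and then express $\mathbf{a}^H\mathbf{R}_k\mathbf{a}$ as an integral of the Fej\'er-type kernel $|\mathbf{a}(\theta_j,\phi_j)^H\mathbf{a}(\theta,\phi)|^2$ against the angular power density of cluster $k$. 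Since this kernel concentrates at $(\theta,\phi)=(\theta_j,\phi_j)$ and decays away from it as the aperture grows, and since $(\theta_j,\phi_j)$ lies outside $\Theta_k\times\Phi_k$, the integral---and hence every off-diagonal entry---tends to zero, while the diagonal entries $[\mathbf{H}_{\mathrm{eff}}]_{k,k}$ remain $O(1)$. Thus $\mathbf{H}_{\mathrm{eff}}$ converges to a (random) diagonal matrix.

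With $\mathbf{H}_{\mathrm{eff}}\to\mathrm{diag}(g_1,\dots,g_K)$, the ZF precoder becomes $\mathrm{diag}(1/g_k)$ and the MRT precoder $\mathrm{diag}(g_k^*)$; in both cases $\mathbf{W_{BB}}$ is diagonal and performs no inter-stream mixing, so absorbing the per-user scaling into the transmit-power normalization yields $\mathbf{W_{BB}}\approx\mathbf{I}$, as claimed. I expect the main obstacle to be making the asymptotic orthogonality $\mathbf{U}_k^H\mathbf{a}(\theta_j,\phi_j)\to\mathbf{0}$ rigorous: this is precisely the approximate block-diagonalization property of JSDM \cite{JSDM_Adhikary_Caire_TIT_2013}, which rests on a Szeg\H{o}/Toeplitz eigenvalue argument for the correlation matrix that here must be extended from the linear-array (Toeplitz) setting to the URA's two-dimensional block-Toeplitz structure. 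I would therefore invoke the one-ring results of \cite{JSDM_Adhikary_Caire_TIT_2013} under disjoint angular support and the large-$M$ regime, rather than re-deriving the spectral concentration from scratch.
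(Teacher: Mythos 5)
Your proposal is correct and follows essentially the same route as the paper's own (sketch) proof: you form the effective channel $\mathbf{H}^H\mathbf{W_{RF}}$, argue it is asymptotically diagonal via the JSDM orthogonality $\mathbf{U}_k^H\mathbf{w_{RF,j}}\approx\mathbf{0}$ for users with disjoint angular support, and absorb the resulting diagonal scaling into the sum-power constraint. Your Fej\'er-kernel discussion merely elaborates the spectral-concentration result that both you and the paper ultimately delegate to \cite{JSDM_Adhikary_Caire_TIT_2013}, so there is no substantive difference in approach.
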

\begin{proof}
(Sketch) The MRT and ZF precoders are $\mathbf{W}^{(\mathtt{MRT})}_\mathbf{BB} = \mathbf{W}^H_\mathbf{RF} \mathbf{H}$ and
$\mathbf{W}^{(\mathtt{ZF})}_\mathbf{BB} = (\mathbf{H}^H \mathbf{W}_\mathbf{RF})^{-1}$ respectively. In the asymptotic regime $\mathbf{W}^H_\mathbf{RF} \mathbf{W_{RF}} \approx \mathbf{I}$ \cite{JSDM_Adhikary_Caire_TIT_2013}. For users in clusters with mutually disjoint angular support, $\mathbf{U}^H_\mathbf{i} \mathbf{w_{RB,j}} \approx \mathbf{0}, i \neq j$ \cite{JSDM_Adhikary_Caire_TIT_2013}. Therefore, $\mathbf{H}^H \mathbf{W_{RF}} \approx \mathbf{\Upsilon} = \text{diag}[\upsilon_1\ \upsilon_2 \cdots \upsilon_K]$, a diagonal matrix. Since $\mathbb{E}[\mathbf{dd}^H] = \tfrac{P_{BS}}{K}$, when the sum-power constraint $\mathbb{E}[\| \mathbf{W_{RF} W_{BB} d} \|_2] = P_{BS}$ is imposed, we obtain the desired result.
\end{proof}
\begin{remark}
The above is true when $N^{(BS)}_{az}, N^{(BS)}_{az} \rightarrow \infty$. In the case of finite number of antenna elements, we consider a scheduler where the BS co-schedules $K$ users from clusters such that the above approximation is accurate. \\[-4ex]
\end{remark}

\section{Interference at the Radar due to a Single BS}
The radar is assumed to be searching/tracking a target above the horizon ($\phi < 0$) using a receive beamformer $\mathbf{w_{rad}} \in \mathbb{C}^{N^{(rad)}_{az} N^{(rad)}_{el}}$. The received signal prior to beamforming is $\mathbf{y_{rad}} = \mathbf{H}^H_{\mathbf{R}} \mathbf{W_{RF}} \mathbf{W_{BB}} \mathbf{d}$, where $\mathbf{H_R}$ is the high-${K_R}$ Rician channel between the BS and the radar from (\ref{Doub_dir_chan_model_f}). The frequency dependence of $\mathbf{H_R}$ is ignored for the ease of exposition. Upon receive beamforming, the interference signal is given by
\begin{align}
	\label{Int_at_radar_1}
	i_{rad} = \mathbf{w}^H_{\mathbf{rad}} \mathbf{H}^H_{\mathbf{R}} \mathbf{W_{RF} \mathbf{W_{BB}} d}.
\end{align}
Using equation (\ref{Doub_dir_chan_model_f}) in the above and simplifying, we get
\begin{align*}
i_{rad} & = \sqrt{\tfrac{\beta(d)}{K_R + 1}} \Big( \sqrt{K_R G_{rad}(\theta_{r,L}, \phi_{r,L})} e^{-j \alpha_0} \mathbf{a}^H(\theta_{t,L}, \phi_{t,L})+ \nonumber \\
& \sum\nolimits_{i=1}^{N_c} \sqrt{\tfrac{G_{rad}(\theta_{r,i}, \phi_{r,i})}{N_c}} \gamma'_i \mathbf{a}^H(\theta_{t,i}, \phi_{t,i}) \Big) \mathbf{W_{RF} W_{BB}d},
\end{align*}
where $\gamma'_i = \gamma^*_i e^{-j \alpha_i}$, the radar beamforming gain $G_{rad}(\theta_j, \phi_j) = |\mathbf{w}^H_\mathbf{rad} \mathbf{a}(\theta_j, \phi_j)|^2$, and $\alpha_0$ is the residual phase. The specular component can be ignored if $G_{rad}(\theta_{r,L}, \phi_{r,L}) \gg G_{rad}(\theta_{r,i}, \phi_{r,i})$. For a tractable worst-case analysis model, we make the following assumptions.

\begin{assumption}
(LoS beamforming gain dominance) The radar is scanning above the horizon with $\mathbf{w_{rad}} = \tfrac{\mathbf{a}(\theta, \phi)}{\sqrt{N^{(rad)}_{az} N^{(rad)}_{el}}}$ such that $G_{rad}(\theta_{r,L}, \phi_{r,L}) > G_{rad}(\theta_{r,i}, \phi_{r,i})\ \forall\ 1\leq i \leq N_c$.
\end{assumption}
\begin{assumption}\label{BoresightAssumption}
(Boresight assumption) Boresight of the antenna array of each massive MIMO BS is aligned along the direction of radar ($\theta_{t,L}=0$)
as shown in Fig. \ref{Fig1_Radar_mMIMO_SpecShare_Illustration}\footnote{As we will discuss in Appendix \ref{App1_Proof_BFGain_UpBound}, Assumption 2 does not impact the worst-case analysis.}. 
\end{assumption}
\begin{assumption}
The cellular downlink is exactly co-channel with the radar system, and radar and cellular operating bandwidths are equal. Hence, the frequency-dependent rejection (FDR) factor of the radar is unity\footnote{The FDR is dependent on the radar receiver architecture, spectrum of the interfering signal, and is independent of other parameters. The interference power at the radar is inversely proportional to the FDR. Interested readers are referred to \cite{Hessar_Roy_Radar_WiFi_TAES_2016} for more details.}.
\end{assumption}
\begin{assumption}
In each cell, the scheduler allocates resources to users in different clusters, where all but one cluster has disjoint angular support with the boresight of the BS URA.
\end{assumption}
\indent Based on the above assumptions, we have the following lemma.
\begin{lemma}\label{Lemma_DIUC}
(Dominant interfering user cluster) The interference to the radar from each BS is only due to data transmissions towards a single cluster whose angular support overlaps with the boresight of the URA. 
\end{lemma}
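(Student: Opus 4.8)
The plan is to start from the single-BS interference in (\ref{Int_at_radar_1}), reduce it to a superposition over the $K$ served clusters, and then show that only the cluster straddling the boresight survives. First I would substitute the high-$K_R$ channel (\ref{Doub_dir_chan_model_f}) into (\ref{Int_at_radar_1}) and invoke Assumption 1 (LoS beamforming gain dominance) to discard the $N_c$ scattered terms, since their radar gains $G_{rad}(\theta_{r,i},\phi_{r,i})$ are dominated by the specular gain $G_{rad}(\theta_{r,L},\phi_{r,L})$. Applying Proposition~\ref{Prop_WBB} to set $\mathbf{W_{BB}}\approx\mathbf{I}$, the interference collapses to $i_{rad}\approx c\,\mathbf{a}^H(\theta_{t,L},\phi_{t,L})\mathbf{W_{RF}}\mathbf{d}$, where $c$ gathers the LoS path-loss, Rician, and radar-gain factors. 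Expanding $\mathbf{W_{RF}}$ column-wise with $\mathbf{w_{RF,k}}=\tfrac{1}{\sqrt{M}}\mathbf{a}(\theta_k,\phi_k)$ yields $i_{rad}\approx \tfrac{c}{\sqrt{M}}\sum_{k=1}^{K}\big(\mathbf{a}^H(\theta_{t,L},\phi_{t,L})\mathbf{a}(\theta_k,\phi_k)\big)d_k$, i.e.\ a sum in which each cluster is weighted by the response of its beam in the radar's direction.

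Next I would pass to interference power. Because the symbols are uncorrelated, $\mathbb{E}[\mathbf{d}\mathbf{d}^H]=\tfrac{P_{BS}}{K}\mathbf{I}$, the cross terms vanish and $\mathbb{E}[|i_{rad}|^2]\propto \sum_{k=1}^{K}\tfrac{1}{M}\big|\mathbf{a}^H(\theta_{t,L},\phi_{t,L})\mathbf{a}(\theta_k,\phi_k)\big|^2$. The $k$-th summand is precisely the normalized array gain of the $k$-th beam evaluated at the boresight direction $(\theta_{t,L},\phi_{t,L})$, which by Assumption~\ref{BoresightAssumption} points toward the radar ($\theta_{t,L}=0$). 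The decisive step is the near-orthogonality of URA steering vectors in the large-array regime: $\tfrac{1}{M}|\mathbf{a}^H(\theta_{t,L},\phi_{t,L})\mathbf{a}(\theta_k,\phi_k)|^2$ attains its main-lobe value of order $M$ only when the steered direction $(\theta_k,\phi_k)$ lies within the mainbeam about the boresight, and decays to a negligible sidelobe level otherwise. By Assumption 4 the scheduler co-serves clusters of which all but one have angular support disjoint from the boresight, so exactly one cluster $k^\star$ has its support $\Theta_{k^\star}\times\Phi_{k^\star}$ overlapping the boresight direction; every other beam points into a disjoint angular sector and radiates only sidelobe leakage toward the radar. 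Retaining the dominant $k^\star$ term gives the claim that the radar interference is due to transmissions toward a single cluster.

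I expect the main obstacle to be rigorously controlling the discarded contributions, i.e.\ showing that $\sum_{k\neq k^\star}\tfrac{1}{M}|\mathbf{a}^H(\theta_{t,L},\phi_{t,L})\mathbf{a}(\theta_k,\phi_k)|^2$ is negligible next to the $k^\star$ term. For a finite URA the steering vectors are only approximately orthogonal, so the reduction is asymptotic; the argument leans on the directivity of the array together with the disjoint-support scheduling of Assumption 4, which guarantees that no second beam can place its main lobe on the radar. Since the eventual goal is a worst-case upper bound on the average interference, keeping only the boresight-overlapping cluster is moreover the conservative choice, as that cluster delivers the largest gain toward the radar, which aligns this simplification with the worst-case philosophy of the remainder of the paper.
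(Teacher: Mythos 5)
Your proof is correct and takes essentially the same approach as the paper: both expand the interference over the columns of $\mathbf{W_{RF}}$ and invoke the asymptotic near-orthogonality of URA steering vectors (the JSDM result) together with Assumption 4 to conclude that only the single boresight-overlapping cluster contributes. The paper's version is terser, arguing directly at the amplitude level that $\mathbf{a}^H(\theta_{t,L},\phi_{t,L})\mathbf{w_{RF,j}}\approx 0$ for $j \neq k$, whereas you additionally pass to interference power via $\mathbb{E}[\mathbf{d}\mathbf{d}^H]=\tfrac{P_{BS}}{K}\mathbf{I}$; this is a difference of presentation, not of substance.
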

\begin{proof}
Let the $K$ clusters have azimuth and elevation angles of support given by $\Theta_k$ and $\Phi_k$ respectively, for $1 \leq k \leq K$. In the asymptotic regime, if there is only one $k$ such that $\Theta_k \cap \{ 0^\circ \} \neq \emptyset$, then we get $\mathbf{a}^H(\theta_{t,L}, \phi_{t,L}) \mathbf{w_{RF,j}} \approx 0$ for $j \neq k$ and $\mathbf{a}^H(\theta_{t,L}, \phi_{t,L}) \mathbf{w_{RF,k}} \neq 0$ \cite{JSDM_Adhikary_Caire_TIT_2013}. The cluster with overlapping angular support is termed as the `Dominant Interfering User Cluster' (DIUC).
\end{proof}
Based the above, we have the following key result. 
\begin{theorem}
The worst-case average interference power at the radar due to the DIUC is given by
\begin{align}
\label{WorstCaseAvgIntPow_SingleBS}
\bar{I}_{rad} < I^{(\mathtt{w})}_{rad} =  \tfrac{\beta(d) G_{rad}(\theta_{r,L}, \phi_{r,L}) |\mathbf{a}^H(0, \phi_{t,L}) \mathbf{a}(\theta_k, \phi_k)|^2 P_{BS}}{N^{(BS)}_{az} N^{(BS)}_{el} K}. 
\end{align}
\end{theorem}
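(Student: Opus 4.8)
The plan is to compute the average interference power $\bar{I}_{rad}=\mathbb{E}\big[|i_{rad}|^2\big]$ directly from the post-beamforming expression for $i_{rad}$, averaging over the two independent sources of randomness: the diffuse fading amplitudes $\{\gamma_i\}$ and the data symbols $\mathbf{d}$. First I would substitute $\mathbf{W_{BB}}\approx\mathbf{I}$ from Proposition \ref{Prop_WBB}, so that $\mathbf{W_{RF}}\mathbf{W_{BB}}\mathbf{d}$ collapses to $\mathbf{W_{RF}}\mathbf{d}$. Writing $\mathbf{v}_L^H=\mathbf{a}^H(\theta_{t,L},\phi_{t,L})\mathbf{W_{RF}}$ and $\mathbf{v}_i^H=\mathbf{a}^H(\theta_{t,i},\phi_{t,i})\mathbf{W_{RF}}$, the interference is a linear form in $\mathbf{d}$ whose coefficient vector is a deterministic LoS part plus a zero-mean diffuse part. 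Expanding $|i_{rad}|^2$ and averaging, the LoS--diffuse cross terms vanish because $\mathbb{E}[\gamma_i]=0$ and the $\gamma_i$ are independent of $\mathbf{d}$; using $\mathbb{E}[\mathbf{d}\mathbf{d}^H]=\tfrac{P_{BS}}{K}\mathbf{I}$ and $\mathbb{E}[|\gamma_i|^2]=1$ then gives
\[
\bar{I}_{rad}=\tfrac{\beta(d)}{K_R+1}\tfrac{P_{BS}}{K}\Big(K_R\,G_{rad}(\theta_{r,L},\phi_{r,L})\,\|\mathbf{v}_L\|^2+\tfrac{1}{N_c}\sum\nolimits_{i=1}^{N_c}G_{rad}(\theta_{r,i},\phi_{r,i})\,\|\mathbf{v}_i\|^2\Big).
\]

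Next I would evaluate the LoS coefficient $\|\mathbf{v}_L\|^2=\|\mathbf{W_{RF}}^H\mathbf{a}(0,\phi_{t,L})\|^2$ using Assumption \ref{BoresightAssumption} ($\theta_{t,L}=0$) together with Lemma \ref{Lemma_DIUC}: in the asymptotic regime only the DIUC beam $\mathbf{w_{RF,k}}=\tfrac{1}{\sqrt{M}}\mathbf{a}(\theta_k,\phi_k)$ has angular overlap with the boresight, so $\mathbf{a}^H(0,\phi_{t,L})\mathbf{w_{RF,j}}\approx 0$ for $j\neq k$ and hence $\|\mathbf{v}_L\|^2=\tfrac{1}{M}|\mathbf{a}^H(0,\phi_{t,L})\mathbf{a}(\theta_k,\phi_k)|^2$ with $M=N^{(BS)}_{az}N^{(BS)}_{el}$. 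Substituting this into the first term shows that the LoS contribution equals exactly $\tfrac{K_R}{K_R+1}I^{(\mathtt{w})}_{rad}$, which is strictly below $I^{(\mathtt{w})}_{rad}$ since $\tfrac{K_R}{K_R+1}<1$. This already reproduces the claimed expression up to the treatment of the diffuse sum, and it is the factor $\tfrac{K_R}{K_R+1}$ that accounts for the strict inequality.

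The remaining, and principal, obstacle is to show that the diffuse contribution is absorbed by the available slack $\big(1-\tfrac{K_R}{K_R+1}\big)I^{(\mathtt{w})}_{rad}=\tfrac{1}{K_R+1}I^{(\mathtt{w})}_{rad}$, so that the full $\bar{I}_{rad}$ still satisfies $\bar{I}_{rad}<I^{(\mathtt{w})}_{rad}$. Here I would invoke Assumption 1 (LoS beamforming-gain dominance), which gives $G_{rad}(\theta_{r,i},\phi_{r,i})<G_{rad}(\theta_{r,L},\phi_{r,L})$ for every $i$, together with the large Rician factor $K_R\ggg 1$ that multiplies the diffuse sum by the vanishing prefactor $\tfrac{1}{K_R+1}$. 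Combining these with the fact that $\mathbf{W_{RF}}$ is asymptotically semi-unitary, so that $\|\mathbf{v}_i\|^2\leq\|\mathbf{a}(\theta_{t,i},\phi_{t,i})\|^2=M$, lets me argue that $\tfrac{1}{N_c}\sum_i G_{rad}(\theta_{r,i},\phi_{r,i})\|\mathbf{v}_i\|^2$ is dominated by $G_{rad}(\theta_{r,L},\phi_{r,L})\|\mathbf{v}_L\|^2$ and thus fits inside the slack. I expect the delicate step to be precisely this diffuse bound: in principle a scatterer whose departure angle coincides with a cluster beam could make a single $\|\mathbf{v}_i\|^2$ comparable to $\|\mathbf{v}_L\|^2$, so the argument must lean on Assumption 1 and the $\tfrac{1}{K_R+1}$ prefactor rather than on the beamforming geometry alone. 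Collecting the exact LoS term with this bound then yields $\bar{I}_{rad}<I^{(\mathtt{w})}_{rad}$.
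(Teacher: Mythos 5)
Your setup mirrors the paper's proof almost step for step: the same expansion of $\mathbb{E}[|i_{rad}|^2]$ with vanishing LoS--diffuse cross terms (uncorrelated, zero-mean MPCs), the same use of Proposition~\ref{Prop_WBB}, Assumption~\ref{BoresightAssumption}, and Lemma~\ref{Lemma_DIUC} to reduce $\|\mathbf{v}_L\|^2$ to $\tfrac{1}{M}|\mathbf{a}^H(0,\phi_{t,L})\mathbf{a}(\theta_k,\phi_k)|^2$, and the same $\mathbb{E}[|d_k|^2]=P_{BS}/K$ substitution. Up to the identification of the LoS contribution as $\tfrac{K_R}{K_R+1}I^{(\mathtt{w})}_{rad}$, this is the paper's argument in a slightly different order.

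The genuine gap is in your final step, and it is precisely the step you flagged as delicate: the appeal to the $\tfrac{1}{K_R+1}$ prefactor and $K_R \ggg 1$ cannot close it, because $K_R$ cancels exactly. The diffuse contribution carries the \emph{same} $\tfrac{1}{K_R+1}$ factor as the available slack, so with $G_L=G_{rad}(\theta_{r,L},\phi_{r,L})$ and $G_i=G_{rad}(\theta_{r,i},\phi_{r,i})$ one has
\begin{equation*}
\bar{I}_{rad} < I^{(\mathtt{w})}_{rad}
\iff
\tfrac{1}{N_c}\sum\nolimits_{i=1}^{N_c} G_i\,\|\mathbf{v}_i\|^2 \;<\; G_L\,\|\mathbf{v}_L\|^2 ,
\end{equation*}
a condition in which $K_R$ does not appear at all: if the average diffuse payload exceeded the LoS payload, the claimed bound would fail for \emph{every} finite $K_R$, however large. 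Your other ingredient, the semi-unitary bound $\|\mathbf{v}_i\|^2 \leq M$, is also too crude, since $\|\mathbf{v}_L\|^2=\tfrac{1}{M}|\mathbf{a}^H(0,\phi_{t,L})\mathbf{a}(\theta_k,\phi_k)|^2\leq M$ is in general a strict inequality (the LoS direction need not coincide with the DIUC beam). What the paper does instead is read Assumption~1 as making the entire per-path LoS contribution dominant --- i.e., it bounds each diffuse payload $G_i\,\mathbb{E}[\|\mathbf{a}^H(\theta_{t,i},\phi_{t,i})\mathbf{W_{RF}W_{BB}d}\|_2^2]$ by the LoS payload $G_L\,\mathbb{E}[\|\mathbf{a}^H(\theta_{t,L},\phi_{t,L})\mathbf{W_{RF}W_{BB}d}\|_2^2]$ and then uses only that the Rician weights $\tfrac{K_R}{K_R+1}$ and $\tfrac{1}{K_R+1}$ sum to one; the largeness of $K_R$ is never invoked. (Your instinct that the beamforming geometry alone does not give this --- a scatterer aligned with the cluster beam is the problem case --- is sound, and the paper is itself somewhat loose here since Assumption~1 literally compares only receive gains; but the repair must come from that per-path domination, combining the receive-gain dominance with a transmit-side statement $\|\mathbf{v}_i\|^2 \leq \|\mathbf{v}_L\|^2$, not from $K_R$.)
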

\begin{proof}
Under the realistic assumption that each MPC is uncorrelated with the others, the average interference power $\bar{I}_{rad} = \mathbb{E}[|i_{rad}|^2]$ is given by
\begin{align}
\label{Irad_avgPow}
\bar{I}_{rad} & = \tfrac{\beta(d) K_R G_{rad}(\theta_{r,L} \phi_{r,L}) \mathbb{E}[ \|\mathbf{a}^H(0, \phi_{t,L}) \mathbf{W_{RF} W_{BB} d}\|^2_2]}{K_R + 1} + \nonumber \\
& \sum_{i=1}^{N_c} \tfrac{\beta(d) \mathbb{E}[\gamma'^2_i \| \mathbf{a}^H(\theta_{t,i}, \phi_{t,i}) \mathbf{W_{RF} W_{BB} d}\|^2_2] G_{rad}(\theta_{r,i} \phi_{r,i})}{N_c (K_R + 1)}.
\end{align}
Using Assumption 1, we get $\bar{I}_{rad} < \beta(d) G_{rad}(\theta_{r,L}, \phi_{r,L}) \cdot \mathbb{E}[ \|\mathbf{a}^H(\theta_{t,L}, \phi_{t,L}) \mathbf{W_{RF} W_{BB} d}\|^2_2] $ since $\mathbb{E}[|\gamma'_i|^2]=1$. In addition, by Proposition \ref{Prop_WBB}, Assumption 2 and Lemma \ref{Lemma_DIUC}, we get $\bar{I}_{rad} < \mathbb{E}[|\mathbf{a}^H(0, \phi_{t,L}) \mathbf{w_{RF,k}} d_k|^2] \beta(d) G_{rad}(\theta_{r,L}, \phi_{r,L})$. Finally, using $\mathbb{E}[|d_k|^2]=P_{BS}/K$ and substituting the RF beamformer for the DIUC, we obtain the desired result.
\end{proof}
In summary, the worst-case average interference in high-$K_R$ Rician channels in the asymptotic regime resembles the Friis transmission equation, with the power scaled by the beamforming gains, and the power allocation factor to the DIUC. With this general result, we analyze the average interference due to the cellular network in the next section. 
\section{Analysis of Average Interference at the Radar due to the Massive MIMO DL} \label{Sec_Int_at_Radar_mMIMO_Network}
We model the spatial distribution of the massive MIMO BSs and radars as independent PPPs $\mathbf{\Phi_{BS}}$ and $\mathbf{\Phi_{rad}}$ of intensity $\lambda_{BS}$ and $\lambda_{rad}$ respectively, such that $\lambda_{rad} \lll \lambda_{BS}$. The typical radar is located at the origin, with an exclusion zone of radius $r_\mathtt{exc}$ within which the BSs are prohibited from operating. While the range of azimuth of a randomly selected point in the cell is independent of the cell size, the elevation angle depends on the cell size and hence, on $\lambda_{BS}$. Compared to prior works \cite{Hessar_Roy_Radar_WiFi_TAES_2016}, \cite{Kim_WiFi_Radar_WCL_2017} which focus on beamforming in the azimuth, mathematical modeling of elevation beamforming presents technical challenges due to (a) lack of radial symmetry in the Voronoi cell, (b) possibility of arbitrarily large Voronoi cells, and (c) correlation between adjacent cells, which can affect the elevation distribution.
While correlation between adjacent cells does not deter the analysis since we are interested in the average interference power, the lack of radial symmetry and possibility of arbitrarily large cells need a more thoughtful treatment. In addition, the presence of sidelobes in the beamforming pattern complicates the problem since it is non-trivial to express the worst-case beamforming gain as a function of the cell-size.
Below, we develop the techniques to address these issues, and present the worst-case and nominal average interference analysis. 
\begin{lemma}\label{Lemma_Monotonic_BF_Gain}
(Monotonic beamforming gain function) For the BS $N_{az} \times N_{el}$ URA with $\lambda/2$-spacing, if $\phi \in [-\pi/2, \pi/2), 0 \leq \phi_\mathtt{m} \leq \tfrac{\pi}{2}$, and $\theta \in [-\pi/2, \pi/2)$, then the upper bound of the beamforming gain is given by
\begin{align}
\label{BFGain_Tight_UpperBound}
G^{(\mathtt{max})}_{BS}(\phi, \phi_\mathtt{m}) & = \underset{\substack{\phi_k\in [\phi_\mathtt{m},\pi/2) \\ \theta_k \in [-\pi/2, \pi/2)}}{\max} G_{BS}(\theta, \phi, \theta_k, \phi_k) \\
& = \begin{cases}
N_{az}N_{el}, \quad \quad \quad \quad \text{if } \phi_\mathtt{m} \leq \phi, \\
 G_{BS} (0, \phi, 0, \phi_\mathtt{m}),\ \text{if } \sin \phi_\mathtt{m} \leq \tfrac{1 + N_{el} \sin \phi}{N_{el}} \\
 \tfrac{N_{az}/N_{el}}{\sin^2 \big(\tfrac{\pi (\sin \phi_\mathtt{m} - \sin\phi )}{2}  \big)}, \quad \text{otherwise}
\end{cases} \nonumber 
\end{align}
where $G_{BS}(\theta, \phi, \theta_k, \phi_k) = \tfrac{1}{N_{az} N_{el}}|\mathbf{a}^H(\theta,\phi) \mathbf{a} (\theta_k, \phi_k)|^2$.
\end{lemma}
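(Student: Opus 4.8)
The plan is to exploit the separable (Kronecker) structure of the URA steering vector. First I would write $\mathbf{a}(\theta,\phi) = \mathbf{a}_{az}(\theta,\phi)\otimes\mathbf{a}_{el}(\phi)$ for the $N_{az}\times N_{el}$ array with $\lambda/2$-spacing, so that $\mathbf{a}^H(\theta,\phi)\mathbf{a}(\theta_k,\phi_k)$ factors into an azimuth and an elevation geometric (Dirichlet) sum. Squaring and normalizing gives
\begin{equation*}
G_{BS}(\theta,\phi,\theta_k,\phi_k) = \frac{1}{N_{az}N_{el}}\, D_{N_{az}}(u)\, D_{N_{el}}(v),
\end{equation*}
where $D_N(x)=\frac{\sin^2(N\pi x/2)}{\sin^2(\pi x/2)}$ is the squared Dirichlet kernel, $u=\sin\theta_k\cos\phi_k-\sin\theta\cos\phi$ is the azimuth spatial-frequency mismatch, and $v=\sin\phi_k-\sin\phi$ is the elevation mismatch. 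The structural point is that $\theta_k$ enters only through $u$ while $\phi_k$ enters $v$, which decouples the two maximizations.

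Next I would dispose of the azimuth factor. Since $D_{N_{az}}(u)\le D_{N_{az}}(0)=N_{az}^2$ for all $u$, and $u=0$ is attainable by a suitable $\theta_k$ (in particular $\theta_k=0$ under the boresight Assumption \ref{BoresightAssumption}, where $\theta=0$), the maximization over $\theta_k$ pins the azimuth factor at $N_{az}^2$ independently of $\phi_k$ and of $\theta$, which is exactly why the bound carries no $\theta$-dependence. This reduces the claim to the one-dimensional problem
\begin{equation*}
G^{(\mathtt{max})}_{BS}(\phi,\phi_\mathtt{m}) = \frac{N_{az}}{N_{el}}\max_{\phi_k\in[\phi_\mathtt{m},\pi/2)} D_{N_{el}}\big(\sin\phi_k-\sin\phi\big),
\end{equation*}
i.e.\ maximizing $D_{N_{el}}(v)$ over $v\in[v_0,\,1-\sin\phi)$ with $v_0=\sin\phi_\mathtt{m}-\sin\phi$.

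I would then settle the scalar problem using standard properties of $D_{N_{el}}$: it is $2$-periodic with global peaks $N_{el}^2$, its main lobe is strictly decreasing on $(0,2/N_{el})$, and it is dominated pointwise by its envelope $D_{N_{el}}(v)\le 1/\sin^2(\pi v/2)$. Three regimes arise. If $\phi_\mathtt{m}\le\phi$, then $v=0$ is feasible, the peak is hit, and $G^{(\mathtt{max})}_{BS}=N_{az}N_{el}$ (Case 1). If $0<v_0\le 1/N_{el}$ (equivalently $\sin\phi_\mathtt{m}\le (1+N_{el}\sin\phi)/N_{el}$), then $v_0$ lies inside the decreasing main lobe and its value there, $D_{N_{el}}(v_0)\ge 1/\sin^2(\pi/(2N_{el}))\approx 0.4\,N_{el}^2$, exceeds every sidelobe peak (the first sidelobe being $\approx 13$ dB down); hence the maximum is attained at $\phi_k=\phi_\mathtt{m}$, giving $G_{BS}(0,\phi,0,\phi_\mathtt{m})$ (Case 2). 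Otherwise ($v_0>1/N_{el}$) a sidelobe further out could beat the boundary value, so I would instead bound $D_{N_{el}}$ by its envelope, which is decreasing in $v$ over the range of interest and hence maximized at $v_0$, yielding $\frac{N_{az}/N_{el}}{\sin^2(\pi(\sin\phi_\mathtt{m}-\sin\phi)/2)}$ (Case 3). A clean consistency check is that at the threshold $v_0=1/N_{el}$ the numerator $\sin^2(N_{el}\pi v_0/2)=1$, so the Case 2 and Case 3 expressions coincide.

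The main obstacle is precisely this scalar maximization. In Case 2 one must quantitatively certify that the main-lobe level at $v_0\le1/N_{el}$ dominates the sidelobe-level bound, so that the boundary value is the true global maximum; in Case 3 one must ensure the envelope $1/\sin^2(\pi v/2)$ is monotone decreasing across the entire feasible $v$-range. The latter holds as long as $v$ stays below $1$—which is guaranteed in the near-horizon regime relevant to the radar ($\phi\approx 0$)—thereby excluding the periodic grating-lobe resurgence of $D_{N_{el}}$ as $v\to 2$; handling, or explicitly ruling out, that corner is the one delicate point. The entire construction is arranged so that $G^{(\mathtt{max})}_{BS}(\phi,\phi_\mathtt{m})$ is monotonically decreasing in $\phi_\mathtt{m}$, which is what later makes it integrable over the Poisson-Voronoi cell.
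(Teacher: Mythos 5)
Your proposal is correct and follows essentially the same route as the paper's proof: Kronecker factorization of the URA steering vector into azimuth and elevation Dirichlet kernels, bounding the azimuth factor by $N_{az}^2$, and the identical three-case split on $\phi_\mathtt{m}$ relative to $\phi$ and the main-lobe width. If anything, you are more careful than the paper, whose own proof in Case 2 simply asserts that the elevation kernel ``monotonically decreases'' in $\phi_k$ (ignoring sidelobes, which your sidelobe-domination argument handles) and in Case 3 never confronts the grating-lobe resurgence of the kernel as $\sin\phi_k - \sin\phi \to 2$ --- the one corner you rightly flag as delicate, and which both proofs ultimately rely on the near-horizon regime ($\phi \approx 0$) of the radar application to rule out.
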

\begin{proof}
See Appendix \ref{App1_Proof_BFGain_UpBound}.
\end{proof}
\subsection{Circumcircle-based Cell (CBC) Model}
To induce radial symmetry in the Voronoi cell, it needs to be modeled as a circle. In addition, the worst-case interference to the radar occurs when the BS beamforms to the \textit{farthest point in the cell}, according to Lemma \ref{Lemma_Monotonic_BF_Gain}. Since the circumradius determines the distance to the farthest point in a cell, we propose a circumcircle-based construction as shown in Fig. \ref{Fig_Cell_Shape_Models}, with the following PDF. 
\begin{proposition}
The probability density function of the circumradius $r_c$ ($r_c > 0$) of a Poisson-Voronoi cell is 
\begin{align*}
f_{R_C} (r_c) = 8 \pi \lambda_{BS} r_c e^{-4 \pi \lambda_{BS} r^2_c} \Big[1 + \sum\nolimits_{k \geq 1} \Big\{ \tfrac{(-4 \pi \lambda_{BS} r^2_c)^k}{k!} \cdot \nonumber \\ 
\quad\ \  \Big(\tfrac{\psi_k(r_c)}{8 \pi \lambda_{BS} r_c} - \zeta_k (r_c) \Big) - \tfrac{(-4 \pi \lambda_{BS} r^2_c)^{k-1} \zeta_k (r_c) }{(k-1)!}  \Big\} \Big],  \\
\zeta_k(r_c) = \int_{\|\mathbf{u} \|_1 = 1, u_i \in [0,1]} \Big[\prod_{i=1}^{k} F(u_i) \Big] e^{4 \pi \lambda_{BS} r^2_c \sum\limits_{i=1}^k \int\limits_{0}^{u_i} F(t) dt} d\mathbf{u}, \nonumber \\
\psi_k (r) = \tfrac{d \zeta_k(r)}{d r}, 
F(t) = \sin^2(\pi t) \mathbbm{1}(0\leq t \leq \tfrac{1}{2}) + \mathbbm{1}(t > \tfrac{1}{2}), \nonumber
\end{align*}
where $\mathbbm{1}(\cdot)$ denotes the indicator function.
\end{proposition}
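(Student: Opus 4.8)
The plan is to characterize the circumradius $R_C$ as the distance from the nucleus $O$ (placed at the origin under the Palm distribution) to the farthest point of the typical cell, and to translate the event $\{R_C > r_c\}$ into a coverage problem on the circle $C(O,r_c)$, following the approach of \cite{Calka_Circumradius_CCDF_2001}. First I would use the empty-disk property of the Voronoi cell: a point $x$ lies in the typical cell iff the disk $D(x,|x|)$ contains no nucleus other than $O$. Since the cell is convex and star-shaped about $O$, $R_C > r_c$ iff some boundary point lies beyond radius $r_c$, equivalently iff there exists $x \in C(O,r_c)$ with $D(x,r_c)\cap\Phi_{BS}=\emptyset$. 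A nucleus $y$ at polar coordinates $(\rho,\alpha)$ blocks the direction $x$ (i.e. $|x-y|<r_c$) exactly on an arc of $C(O,r_c)$ of angular half-width $\arccos(\rho/2r_c)$ centered at $\alpha$, nonempty only for $\rho<2r_c$. Thus $\{R_C > r_c\}$ is precisely the event that the blocking arcs generated by the nuclei in $D(O,2r_c)$ fail to cover $C(O,r_c)$.

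Second I would reduce the arc statistics to the quantities in the statement. The blocking nuclei form a homogeneous PPP on $D(O,2r_c)$ with expected number $\lambda_{BS}\pi(2r_c)^2 = 4\pi\lambda_{BS}r_c^2$, which is exactly the Poisson parameter in the formula. Mapping the radial coordinate to the half-width $w=\arccos(\rho/2r_c)\in(0,\tfrac{\pi}{2})$ shows a single arc obeys $\Pr(w\le \pi u)=\sin^2(\pi u)$ for $u\le\tfrac{1}{2}$, saturating to $1$ for $u>\tfrac{1}{2}$ since $w\le\tfrac{\pi}{2}$; this is precisely $F(u)$. Hence $F$ is the CDF of a blocking arc's normalized half-width, and the case split at $u=\tfrac{1}{2}$ reflects whether adjacent empty disks $D(x_i,r_c)$ overlap (angular gap $\le\pi$) or are disjoint (gap $>\pi$).

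Third, and this is the technical heart, I would evaluate the non-coverage probability. Using the Mecke--Slivnyak formula together with the classical inclusion--exclusion for coverage of a circle by random arcs, I expect $\Pr(R_C>r_c)$ to expand as an alternating series in the number $k$ of arcs bounding the uncovered gaps, of the form $e^{-4\pi\lambda_{BS}r_c^2}\bigl[1-\sum_{k\ge1}\tfrac{(-4\pi\lambda_{BS}r_c^2)^k}{k!}\zeta_k(r_c)\bigr]$. Here $\zeta_k$ is the integral over the $k$ normalized angular gaps $u_1,\dots,u_k$ (constrained by $\sum_i u_i=1$) of the product $\prod_i F(u_i)$, each selected arc being short enough to leave its gap open, times the void factor guaranteeing that no further nucleus covers the gaps; the exponent $4\pi\lambda_{BS}r_c^2\sum_i\int_0^{u_i}F(t)\,dt$ combines with the prefactor $e^{-4\pi\lambda_{BS}r_c^2}$ to yield $\exp(-\lambda_{BS}\,\mathrm{Area}(\bigcup_i D(x_i,r_c)))$, since $4\pi\lambda_{BS}r_c^2\int_0^u F = \lambda_{BS}r_c^2(\phi-\sin\phi)$ with $\phi=2\pi u$ reproduces the relevant circular-segment areas.

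Finally, since $f_{R_C}(r_c)=-\tfrac{d}{dr_c}\Pr(R_C>r_c)$, I would differentiate the series term by term. The product and chain rules acting on $e^{-4\pi\lambda_{BS}r_c^2}$, on $(4\pi\lambda_{BS}r_c^2)^k$, and on $\zeta_k(r_c)$ generate respectively the $\zeta_k$ terms, the shifted $\tfrac{(-4\pi\lambda_{BS}r_c^2)^{k-1}}{(k-1)!}\zeta_k$ terms, and the derivative terms $\psi_k=\zeta_k'$, assembling into the stated expression. The main obstacle is the third step: rigorously justifying the coverage expansion with the correct combinatorial coefficients and convergence, and matching its geometric integrand to $\zeta_k$ (in particular the overlap/disjoint case split at $u=\tfrac{1}{2}$ and the $\prod_i F(u_i)$ weight). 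A minor additional point is to argue that the farthest point is a.s. attained at a single vertex, so that boundary configurations contribute measure zero and the term-by-term differentiation is valid.
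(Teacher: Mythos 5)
Your proposal is correct in outline, but it proves much more than the paper does: the paper's entire proof is your final step. The paper simply quotes the circumradius distribution of the typical Poisson--Voronoi cell from the cited reference \cite{Calka_Circumradius_CCDF_2001} and differentiates it with respect to $r_c$ via Leibniz's rule; you additionally reconstruct that cited result from scratch. Your reconstruction follows the same route as the reference itself: reduce $\{R_C > r_c\}$ to non-coverage of the circle of directions by random arcs (your empty-disk/blocking-arc argument and the identification of $F$ as the CDF of the normalized arc half-width, including the saturation at $u=\tfrac{1}{2}$, are correct), then evaluate the non-coverage probability by Poissonizing the Stevens/Siegel--Holst inclusion--exclusion formula. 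A point in your favor: the survival function you posit, $e^{-4\pi\lambda_{BS}r_c^2}\bigl[1-\sum_{k\ge1}\tfrac{(-4\pi\lambda_{BS}r_c^2)^k}{k!}\zeta_k(r_c)\bigr]$, is exactly the expression whose negative derivative reproduces the stated PDF term by term (the $\zeta_k$ terms, the index-shifted $\tfrac{(-\cdot)^{k-1}}{(k-1)!}\zeta_k$ terms, and the $\psi_k$ terms arise precisely as you describe, with all signs matching), so your sketch is internally consistent with the proposition, and you also correctly handle the fact that one differentiates a tail probability rather than a CDF. The trade-off is clear: the paper's citation-plus-calculus proof is one line and delegates all geometry to the reference, while yours is self-contained and makes the meaning of $F$ and $\zeta_k$ transparent; however, what you call the ``technical heart'' (the coverage expansion with the correct combinatorial coefficients and the convergence needed for term-by-term differentiation) is exactly the content of the cited theorem and remains a sketch in your write-up. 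If you want the proof to stand alone, that step must be done carefully --- in particular the $\tfrac{1}{k!}$ coefficient has to be matched consistently against the choice of $(k-1)$-dimensional measure $d\mathbf{u}$ on the simplex $\|\mathbf{u}\|_1=1$ in $\zeta_k$, where the gap-counting bookkeeping is easy to get wrong by factors of $(k-1)!$; otherwise your first three steps can simply be replaced by the citation, which is precisely what the paper does.
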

\begin{proof}
The result is obtained by differentiating the CDF of the circumradius ($F_{R_C}(r_c)$) \cite{Calka_Circumradius_CCDF_2001} w.r.t. $r_c$ using Leibniz's rule.
\end{proof}

Using $f_{R_C}(r_c)$ and Lemma \ref{Lemma_Monotonic_BF_Gain}, we obtain the upper bound on the average interference in the following key result.
\begin{theorem}\label{Circum_rad_Model}
The worst-case average interference at the radar is given by 
\begin{align}
\label{Boresight_Cell_Edge_BF_CECC}
\bar{I}_{rad, \mathtt{cbc}} & = \tfrac{\lambda_{BS} P_{BS} PL(r_0)}{K} \int_{-\tfrac{\pi}{2}}^{\tfrac{\pi}{2}} \int_{r_{\mathtt{exc}}}^\infty \int_{0}^{\infty} G_{rad}(\theta_{r,L},-\phi_{t,L}(r) ) \cdot \nonumber \\
&  \tfrac{r G^{(\mathtt{max})}_{BS}(\phi_{t,L}(r), \phi_\mathtt{m}(r_c)) }{(r^2 + (h_{rad} - h_{BS})^2 )^{\alpha/2}}  f_{R_C}(r_c) d r_c dr d\theta_{r,L}, \\
\phi_{t,L}(r) & = \tan^{-1} \big( \tfrac{h_{BS} - h_{rad}}{r} \big),
\phi_\mathtt{m}(r_c) = \tan^{-1} \big( \tfrac{h_{BS}} {r_c}\big).\nonumber 
\end{align}
\end{theorem}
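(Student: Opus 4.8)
The plan is to build the network-level bound by summing the single-BS worst-case interference of the first Theorem over the BS point process and invoking Campbell's theorem. First I would rewrite the per-BS bound using the definition $G_{BS}(\theta,\phi,\theta_k,\phi_k) = \tfrac{1}{N^{(BS)}_{az} N^{(BS)}_{el}}|\mathbf{a}^H(\theta,\phi)\mathbf{a}(\theta_k,\phi_k)|^2$ from Lemma \ref{Lemma_Monotonic_BF_Gain}, so that the factor $|\mathbf{a}^H(0,\phi_{t,L})\mathbf{a}(\theta_k,\phi_k)|^2 / (N^{(BS)}_{az} N^{(BS)}_{el} K)$ in \eqref{WorstCaseAvgIntPow_SingleBS} collapses to $\tfrac{P_{BS}}{K} G_{BS}(0,\phi_{t,L},\theta_k,\phi_k)$. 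By Lemma \ref{Lemma_DIUC} only the DIUC contributes, so the relevant $(\theta_k,\phi_k)$ is the steering direction of that single cluster.

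Next I would encode the worst-case geometry. A BS at horizontal distance $r$ from the radar illuminates it along the departure elevation $\phi_{t,L}(r)=\tan^{-1}\big(\tfrac{h_{BS}-h_{rad}}{r}\big)$, and by the sign convention the radar receives this along $-\phi_{t,L}(r)$, so $G_{rad}(\theta_{r,L},\phi_{r,L})=G_{rad}(\theta_{r,L},-\phi_{t,L}(r))$. Since the circumradius $r_c$ is the distance from the nucleus to the farthest point of the Voronoi cell, any served user lies at horizontal distance at most $r_c$, hence its elevation satisfies $\phi_k \ge \phi_\mathtt{m}(r_c)=\tan^{-1}\big(\tfrac{h_{BS}}{r_c}\big)$. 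The worst case over the admissible cluster directions is therefore obtained by maximizing $G_{BS}(0,\phi_{t,L}(r),\theta_k,\phi_k)$ over $\theta_k\in[-\tfrac{\pi}{2},\tfrac{\pi}{2})$ and $\phi_k\in[\phi_\mathtt{m}(r_c),\tfrac{\pi}{2})$, which is exactly $G^{(\mathtt{max})}_{BS}(\phi_{t,L}(r),\phi_\mathtt{m}(r_c))$ supplied by Lemma \ref{Lemma_Monotonic_BF_Gain}. Substituting $\beta(d)=PL(r_0)\big(r^2+(h_{rad}-h_{BS})^2\big)^{-\alpha/2}$ gives a per-BS bound depending only on $(r,r_c,\theta_{r,L})$.

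Finally I would aggregate. Writing the mean interference as $\mathbb{E}\big[\sum_{x\in\mathbf{\Phi_{BS}}} I^{(\mathtt{w})}_{rad}(x)\big]$ and applying the Campbell--Mecke formula converts the sum into $\lambda_{BS}\int_{\mathbb{R}^2\setminus B(0,r_\mathtt{exc})}\mathbb{E}^0[\cdot]\,\mathrm{d}x$, where the inner Palm expectation is over the circumradius of the cell rooted at $x$. Because $\mathbf{\Phi_{BS}}$ is stationary, this typical-cell circumradius has the position-independent density $f_{R_C}$ of the preceding Proposition, so the Palm expectation becomes $\int_0^\infty(\cdot)f_{R_C}(r_c)\,\mathrm{d}r_c$. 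Passing to polar coordinates $\mathrm{d}x=r\,\mathrm{d}r\,\mathrm{d}\theta_{r,L}$, restricting the azimuth to the radar's scan sector $[-\tfrac{\pi}{2},\tfrac{\pi}{2})$ and the radius to $[r_\mathtt{exc},\infty)$, and pulling out the constant $\tfrac{\lambda_{BS}P_{BS}PL(r_0)}{K}$ yields \eqref{Boresight_Cell_Edge_BF_CECC}. The main obstacle is the aggregation step: I must justify that the per-cell circumradius, which is a functional of the entire tessellation and is correlated across neighboring cells, may be replaced inside the mean by the independent typical-cell density $f_{R_C}$. This is precisely where linearity of expectation (Campbell--Mecke) rescues the argument --- the inter-cell correlations never enter because only first moments are needed --- and it is the reason the correlated worst-case elevation angles do not appear explicitly.
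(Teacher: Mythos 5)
Your proposal is correct and follows essentially the same route as the paper: substitute the single-BS worst-case bound of Theorem 1 (with $G^{(\mathtt{max})}_{BS}$ from Lemma \ref{Lemma_Monotonic_BF_Gain} capturing the worst-case cluster direction subject to $\phi_k \geq \phi_\mathtt{m}(r_c)$), aggregate over the PPP via Campbell's theorem in polar coordinates on $\mathbb{R}^2 \setminus B(0, r_\mathtt{exc})$, and average over the circumradius density $f_{R_C}$. Your Palm-calculus justification for replacing each cell's circumradius by the typical-cell density is a slightly more careful phrasing of the paper's step of conditioning on $r_c$, applying Campbell's theorem, and then integrating against $f_{R_C}$; it makes explicit the point, stated only informally in the paper, that inter-cell correlations never enter because only first moments are needed.
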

\begin{proof}
See Appendix \ref{App2_Proof_Worst_Case_Int}.
\end{proof}
\begin{corollary}\label{Corollary_Circum_Model}
The approximate worst-case average interference at the radar is given by
\begin{align}
\label{Approx_Circum_Model_avgInt}
\bar{I}^{(\mathtt{approx})}_{rad, \mathtt{cbc}} & = \tfrac{\lambda_{BS} P_{BS} PL(r_0) }{K(\alpha - 2) r^{\alpha - 2}_{exc}} \Big[ \int_{-\tfrac{\pi}{2}}^{\tfrac{\pi}{2}}  G_{rad}(\theta_{r,L},0) d\theta_{r,L} \Big] \cdot \nonumber \\
& \ \Big[\int_{0}^{\infty}  G^{(\mathtt{max})}_{BS} (0, \phi_\mathtt{m}(r_c)) f_{R_C} (r_c) dr_c \Big]. 
\end{align} \qedhere
\end{corollary}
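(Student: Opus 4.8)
The plan is to obtain the corollary from Theorem~\ref{Circum_rad_Model} by specializing \eqref{Boresight_Cell_Edge_BF_CECC} to the large-exclusion-zone regime, in which every interfering BS sits at horizontal distance $r \geq r_\mathtt{exc}$ with $r_\mathtt{exc} \gg |h_{BS} - h_{rad}|$. Two simplifications then decouple the triple integral. First, the LoS elevation angle $\phi_{t,L}(r) = \tan^{-1}\big((h_{BS}-h_{rad})/r\big)$ is small and tends to zero as $r$ grows, so I would replace $\phi_{t,L}(r)$ by $0$ inside both beamforming-gain factors, giving $G_{rad}(\theta_{r,L}, -\phi_{t,L}(r)) \approx G_{rad}(\theta_{r,L}, 0)$ and $G^{(\mathtt{max})}_{BS}(\phi_{t,L}(r), \phi_\mathtt{m}(r_c)) \approx G^{(\mathtt{max})}_{BS}(0, \phi_\mathtt{m}(r_c))$. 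This strips the radar gain of all dependence on $r$ and $r_c$, and leaves the BS gain a function of the circumradius $r_c$ alone.

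Second, I would approximate the 3D path-loss denominator: for $r \geq r_\mathtt{exc} \gg |h_{BS}-h_{rad}|$ one has $\big(r^2 + (h_{rad}-h_{BS})^2\big)^{\alpha/2} \approx r^\alpha$, so the radial kernel collapses to $r/(r^2+(h_{rad}-h_{BS})^2)^{\alpha/2} \approx r^{1-\alpha}$. With both beamforming gains now independent of $r$, the integrand factors into a product of a function of $\theta_{r,L}$, a function of $r$, and a function of $r_c$; since every factor is nonnegative, Tonelli's theorem lets me separate the triple integral into the product of three one-dimensional integrals. The radial integral is elementary, $\int_{r_\mathtt{exc}}^\infty r^{1-\alpha}\, dr = \tfrac{1}{(\alpha-2)\, r^{\alpha-2}_\mathtt{exc}}$ for $\alpha>2$, which supplies the prefactor $\tfrac{1}{K(\alpha-2) r^{\alpha-2}_\mathtt{exc}}$ once the constants are carried along; the remaining two factors are exactly the bracketed integrals in \eqref{Approx_Circum_Model_avgInt}, completing the derivation.

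The main obstacle will be justifying the replacement $\phi_{t,L}(r)\approx 0$ in a way that controls the overall error. Because the radial kernel decays as $r^{1-\alpha}$, the dominant contribution comes from $r$ near $r_\mathtt{exc}$, so I would argue that it suffices for $\phi_{t,L}(r_\mathtt{exc})$ to be small, i.e. $r_\mathtt{exc}\gg |h_{BS}-h_{rad}|$, for the approximation to be accurate precisely where the integrand is largest. A second subtlety is that $\phi_{t,L}(r)$ enters the \emph{piecewise} gain $G^{(\mathtt{max})}_{BS}$ of Lemma~\ref{Lemma_Monotonic_BF_Gain} through its first argument, so I must verify that setting $\phi_{t,L}\to 0$ keeps the expression in a consistent branch: since $\phi_\mathtt{m}(r_c)=\tan^{-1}(h_{BS}/r_c)>0$, the branch $\phi_\mathtt{m}\leq\phi$ (which would require $\phi\geq\phi_\mathtt{m}>0$) cannot be triggered at $\phi=0$, so the substitution lands cleanly in the lower two branches and is self-consistent.
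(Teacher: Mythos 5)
Your proposal is correct and follows essentially the same route as the paper's own proof: approximate $\phi_{t,L}(r)\approx 0$ in both gain factors and $\big(r^2+(h_{BS}-h_{rad})^2\big)^{\alpha/2}\approx r^{\alpha}$, then factor the integrand and evaluate $\int_{r_\mathtt{exc}}^{\infty} r^{1-\alpha}\,dr = \tfrac{1}{(\alpha-2)r_\mathtt{exc}^{\alpha-2}}$. Your added justifications (Tonelli for separating the integrals, dominance of the region near $r_\mathtt{exc}$, and the branch-consistency check for the piecewise gain $G^{(\mathtt{max})}_{BS}$) are sound elaborations of steps the paper leaves implicit.
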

\begin{proof}
Since $r \gg h_{BS}$ and $r \gg h_{rad}$, we have $\phi_{t,L}(r) = -\phi_{r,L}(r) \approx 0$, 
and $(r^2 + (h_{BS} - h_{rad})^2)^{\tfrac{\alpha}{2}} \approx r^{\alpha}$. Using these in $\bar{I}_{rad, \mathtt{cbc}}$, grouping the integrands, and integrating over $r$ yields the desired result.
\end{proof}

\begin{figure}[t]
	\centering
	\includegraphics[width=3.4in]{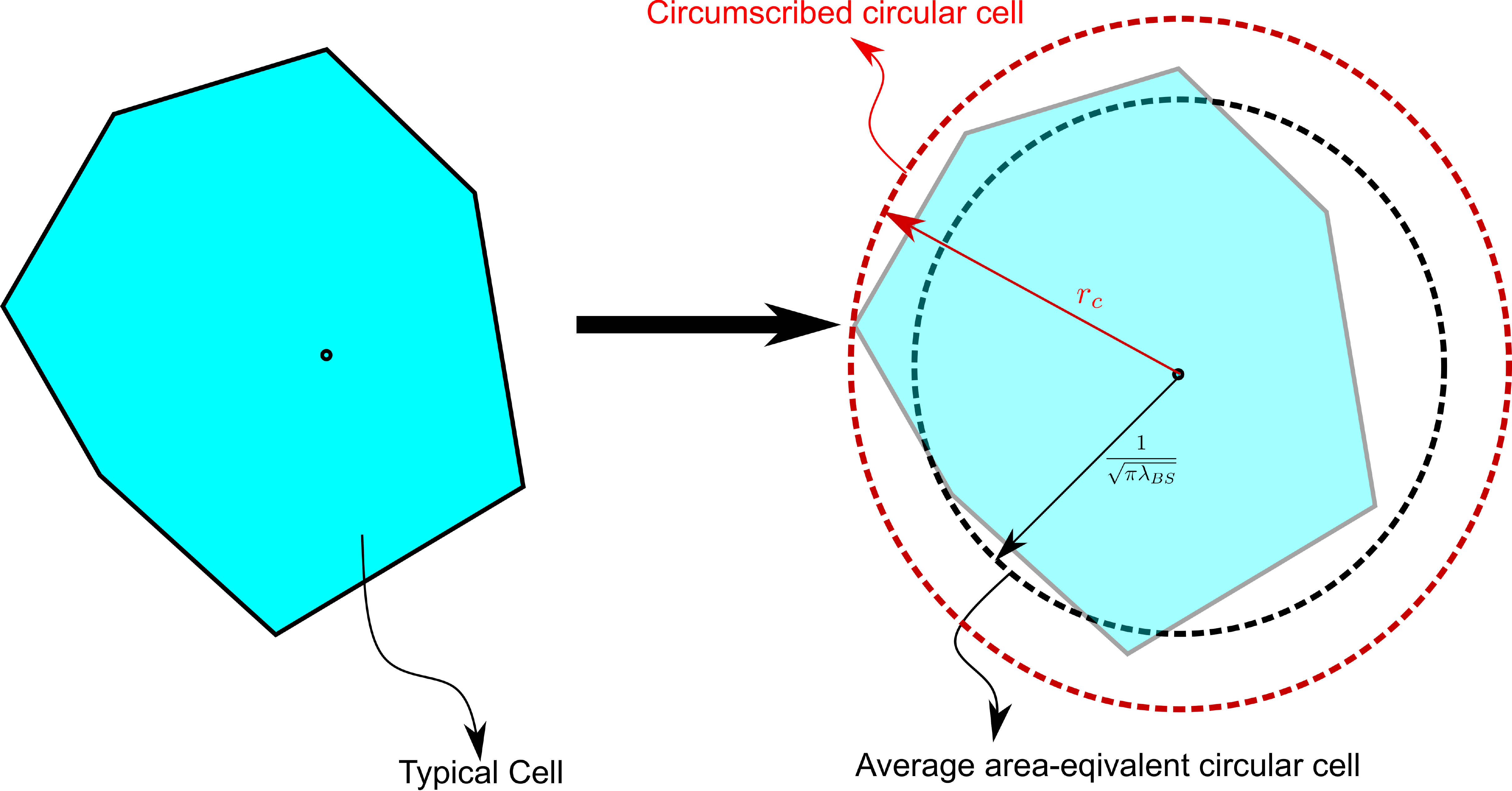}\\
	[-2ex]
	\caption{Radial symmetry can be induced by modeling the Voronoi cell as a (a) circumcircle, or (b) circle of area equal to that of the average typical cell.\\[-10ex]}
	\label{Fig_Cell_Shape_Models}
\end{figure}

\subsection{Average Area-Equivalent Circular Cell (AAECC) Model}\label{AAEC_Model}
The circumcircle-based cell model results in a conservative value for average interference. A simpler, more optimistic model is to replace the Voronoi cell by a circle with an area equal to the average area of a typical cell given by $\tfrac{1}{\lambda_{BS}}$. In this case, the cell radius $r_c = r_a = \tfrac{1}{\sqrt{\pi \lambda_{BS}}}$, and the nominal average interference is given by the following theorem. 
\begin{theorem}\label{Theorem_AACC_Model}
The nominal average interference at the radar is
\begin{align}
\label{Boresight_Cell_Edge_BF_AACC}
\bar{I}_{rad, \mathtt{aaecc}} & = \frac{\lambda_{BS} P_{BS} PL(r_0)}{K} \int_{-\tfrac{\pi}{2}}^{\tfrac{\pi}{2}} \int_{r_{\mathtt{exc}}}^\infty G_{rad}(\theta_{r,L},\phi_{r,L}(r) ) \cdot \nonumber \\
&  \tfrac{r G^{(\mathtt{max})}_{BS}\big(\phi_{t,L}(r),\phi_\mathtt{m}(r_a) \big) }{(r^2 + (h_{rad} - h_{BS})^2 )^{\alpha/2}}  dr d\theta_{r,L}.
\end{align}
\end{theorem}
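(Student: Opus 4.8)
The plan is to reuse the derivation of Theorem \ref{Circum_rad_Model} almost verbatim, since the only structural change in the average area-equivalent circular cell (AAECC) model is that the random Voronoi cell is replaced by a deterministic disk. First I would start from the per-BS worst-case interference in (\ref{WorstCaseAvgIntPow_SingleBS}) and rewrite the factor $|\mathbf{a}^H(0, \phi_{t,L})\mathbf{a}(\theta_k,\phi_k)|^2 / (N^{(BS)}_{az} N^{(BS)}_{el})$ as the BS beamforming gain $G_{BS}(0, \phi_{t,L}, \theta_k, \phi_k)$. Maximizing this over the admissible cluster directions $(\theta_k, \phi_k)$ of the dominant interfering cluster (Lemma \ref{Lemma_DIUC}) via Lemma \ref{Lemma_Monotonic_BF_Gain} replaces it by $G^{(\mathtt{max})}_{BS}(\phi_{t,L}, \phi_\mathtt{m})$, where $\phi_\mathtt{m}$ is the elevation angle toward the farthest point the BS serves in its cell.

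The AAECC-specific step is to model each cell as a disk of area $1/\lambda_{BS}$, i.e.\ of radius $r_a = \tfrac{1}{\sqrt{\pi \lambda_{BS}}}$, so that the farthest served point sits at horizontal distance $r_a$ and the worst-case served elevation angle $\phi_\mathtt{m}(r_a) = \tan^{-1}(h_{BS}/r_a)$ is \emph{deterministic}. This is the sole departure from Theorem \ref{Circum_rad_Model}: because $r_a$ carries no randomness, there is no expectation over the circumradius density $f_{R_C}$, so the innermost $dr_c$ integral in (\ref{Boresight_Cell_Edge_BF_CECC}) collapses and $\phi_\mathtt{m}(r_c)$ is replaced everywhere by $\phi_\mathtt{m}(r_a)$.

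To aggregate over the network I would place the typical radar at the origin and parameterize each BS by its horizontal distance $r$ and azimuth $\theta_{r,L}$, so that the 3-D link distance is $d = \sqrt{r^2 + (h_{rad}-h_{BS})^2}$, giving $\beta(d) = PL(r_0)\,(r^2 + (h_{rad}-h_{BS})^2)^{-\alpha/2}$, while the departure and arrival elevation angles are $\phi_{t,L}(r) = \tan^{-1}((h_{BS}-h_{rad})/r)$ and $\phi_{r,L}(r) = -\phi_{t,L}(r)$. Summing the per-BS contributions over $\mathbf{\Phi_{BS}}$ and taking the expectation, I would invoke Campbell's theorem for the first moment, which depends only on the intensity $\lambda_{BS}$ and not on inter-cell correlations---this is precisely why the correlation of worst-case elevation angles across neighboring cells is immaterial for the \emph{average}. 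With the planar area element $r\,dr\,d\theta_{r,L}$, the exclusion-zone lower limit $r_\mathtt{exc}$, and the azimuth range of the radar's search sector carried over unchanged from Theorem \ref{Circum_rad_Model}, this assembles into (\ref{Boresight_Cell_Edge_BF_AACC}).

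The integration itself is routine; the main subtlety is the geometric bookkeeping that lets the single radius $r_a$ stand in for the whole cell. I must argue that the worst-case served point is the farthest one and that its elevation angle governs $G^{(\mathtt{max})}_{BS}$ monotonically, which is exactly what Lemma \ref{Lemma_Monotonic_BF_Gain} delivers. The conceptually delicate point is that this equal-area construction yields only a \emph{nominal} value rather than a rigorous bound, since collapsing $f_{R_C}$ to a point mass at $r_a$ discards the heavy upper tail of the circumradius that the circumcircle model retains; I would therefore frame the statement as an approximation and defer the comparison of the nominal and worst-case values to the subsequent analysis.
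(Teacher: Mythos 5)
Your proposal is correct and takes essentially the same approach as the paper: the paper's proof simply treats the AAECC model as a special case of Theorem \ref{Circum_rad_Model} with $f_{R_C}(r_c) = \delta\big(r_c - \tfrac{1}{\sqrt{\pi \lambda_{BS}}}\big)$ and invokes the sifting property of the Dirac delta, which is precisely your observation that the $dr_c$ integral collapses and $\phi_\mathtt{m}(r_c)$ is replaced everywhere by $\phi_\mathtt{m}(r_a)$. The only difference is presentational—you re-walk the Campbell's-theorem derivation rather than citing Theorem \ref{Circum_rad_Model} as a black box—but the substance is identical.
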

\begin{proof}
This model is a special case of Theorem \ref{Circum_rad_Model}, where $f_{R_c} (r_c) = \delta \big( r_c - \tfrac{1}{\sqrt{\pi \lambda_{BS}}} \big)$. Using the sifting property of the Dirac delta function $\delta(\cdot)$, we obtain the desired result.
\end{proof}
\begin{corollary}\label{Corollary_AACC}
The approximate nominal average interference is given by
\begin{align*}
\bar{I}^{(\mathtt{approx})}_{rad, \mathtt{aaecc}} & = \tfrac{\lambda_{BS} P_{BS} PL(r_0) G^{(\mathtt{max})}_{BS}\big(0,\phi_\mathtt{m}( r_a ) \big) }{K(\alpha - 2)r^{\alpha - 2}_{exc}}\int_{-\tfrac{\pi}{2}}^{\tfrac{\pi}{2}} G_{rad}(\theta, 0) d\theta.
\end{align*}
\end{corollary}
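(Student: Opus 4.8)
The plan is to derive Corollary \ref{Corollary_AACC} directly from Theorem \ref{Theorem_AACC_Model} by invoking the same far-field simplifications that produced Corollary \ref{Corollary_Circum_Model}. The crucial structural observation is that the AAECC model fixes the cell radius at the \emph{deterministic} value $r_a = 1/\sqrt{\pi \lambda_{BS}}$, so the worst-case elevation $\phi_\mathtt{m}(r_a) = \tan^{-1}(h_{BS}/r_a)$ is a constant, independent of the integration variables $r$ and $\theta_{r,L}$. Consequently, $G^{(\mathtt{max})}_{BS}(\phi_{t,L}(r), \phi_\mathtt{m}(r_a))$ depends on the integration variables only through its first argument $\phi_{t,L}(r)$, which the large-distance regime will then eliminate.

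Next I would apply the far-field approximation. Since the interfering BSs lie outside the exclusion zone, the horizontal separation dominates both antenna heights ($r \gg h_{BS}, h_{rad}$), giving $\phi_{t,L}(r) = -\phi_{r,L}(r) \approx 0$ and $(r^2 + (h_{rad} - h_{BS})^2)^{\alpha/2} \approx r^{\alpha}$. Substituting these into (\ref{Boresight_Cell_Edge_BF_AACC}) replaces the radar gain by $G_{rad}(\theta_{r,L}, 0)$ and the BS gain by the constant $G^{(\mathtt{max})}_{BS}(0, \phi_\mathtt{m}(r_a))$, both of which can be pulled outside the $r$-integral.

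The double integral then separates into a product of a $\theta_{r,L}$-integral and an elementary $r$-integral. The former is $\int_{-\pi/2}^{\pi/2} G_{rad}(\theta, 0)\, d\theta$, while the latter reduces to $\int_{r_{\mathtt{exc}}}^{\infty} r \cdot r^{-\alpha}\, dr = \int_{r_{\mathtt{exc}}}^{\infty} r^{1-\alpha}\, dr = \tfrac{r^{2-\alpha}_{\mathtt{exc}}}{\alpha - 2}$, which converges precisely because $\alpha > 2$. Collecting the constant prefactors together with these two evaluated integrals yields the claimed closed form.

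I do not anticipate a genuine obstacle here. Unlike the circumcircle case of Corollary \ref{Corollary_Circum_Model}, the deterministic cell radius removes the averaging against $f_{R_C}(r_c)$ altogether, so the BS beamforming gain factors out as a single constant rather than remaining trapped inside an integral over $r_c$; the proof is therefore strictly simpler. The only point requiring care is confirming that the far-field approximation is self-consistent with the AAECC construction, namely that $r_{\mathtt{exc}}$ (and hence the smallest admissible $r$) is large relative to the antenna heights, which is the same operating regime already assumed when deriving Corollary \ref{Corollary_Circum_Model}.
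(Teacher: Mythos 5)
Your proposal is correct and follows essentially the same route as the paper, which simply notes that the proof mirrors that of Corollary~\ref{Corollary_Circum_Model}: apply the far-field approximations $\phi_{t,L}(r) \approx 0$ and $(r^2 + (h_{BS}-h_{rad})^2)^{\alpha/2} \approx r^\alpha$ to Theorem~\ref{Theorem_AACC_Model}, separate the integrals, and evaluate $\int_{r_{\mathtt{exc}}}^{\infty} r^{1-\alpha}\,dr = r_{\mathtt{exc}}^{2-\alpha}/(\alpha-2)$. Your additional observation that the deterministic radius $r_a$ lets the BS gain factor out as a constant (rather than remain under an integral against $f_{R_C}$) is exactly why the paper can dispatch this proof by reference to the earlier corollary.
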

\begin{proof}
The proof follows the same steps as Corollary \ref{Corollary_Circum_Model}.
\end{proof}

\subsection{System Design Insights from Analytical Results}
\subsubsection{Scaling of average interference power with BS density}
From (\ref{Boresight_Cell_Edge_BF_CECC}) and (\ref{Boresight_Cell_Edge_BF_AACC}), we see that $\lambda_{BS}$ impacts the average interference through the linear, and the BS beamforming gain ($G_{BS}$) terms. It is related to the cell size via the circumradius distribution and the average area of the typical cell, which impacts the \textit{minimum elevation angle} ($\phi_\mathtt{m}$). Note that this dependence is not observed in azimuth-only beamforming models. However, when $h_{BS} \ll r_c$, $\phi_\mathtt{m}(r_c) \rightarrow 0$ and hence, $G_{BS} \rightarrow N^{(BS)}_{az} N^{(BS)}_{el}$. In this regime, the worst-case average interference power scales linearly with $\lambda_{BS}$. 
\subsubsection{Exclusion Zone Radius}
In practice, exclusion zones are defined based on the average aggregate interference power (for e.g. see \cite{FCC_3point5_GHz_Rules}). Using Corollaries \ref{Corollary_Circum_Model} and \ref{Corollary_AACC}, for an average interference threshold $\bar{I}_{th}$, the worst-case exclusion zone radius can be obtained using
\begin{align*}
r^{(\mathtt{wor})}_{exc} & \approx \Big(\tfrac{\lambda_{BS} P_{BS} PL(r_0) }{K(\alpha - 2) \bar{I}_{th}} \Big[ \int_{-\tfrac{\pi}{2}}^{\tfrac{\pi}{2}}  G_{rad}(\theta_{r,L},0) d\theta_{r,L} \Big] \cdot \nonumber \\
& \ \Big[\int_{0}^{\infty}  G^{(\mathtt{max})}_{BS} (0,\phi_\mathtt{m}(r_c)) f_{R_C} (r_c) dr_c \Big] \Big)^{\tfrac{1}{\alpha - 2}}, \alpha >2. 
\end{align*}
\subsubsection{Constant Gap in Average Interference Predicted by CBC and AAECC Models}
By Corollaries (\ref{Corollary_Circum_Model}) and (\ref{Corollary_AACC}), we observe that the ratio of average interference powers is nearly independent of $r_\mathtt{exc}$, given by
\begin{align*}
\eta &= \tfrac{\bar{I}^{(\mathtt{approx})}_{rad, \mathtt{cbc}}}{\bar{I}^{(\mathtt{approx})}_{rad, \mathtt{aaecc}}}= \tfrac{\int\displaylimits_{0}^{\infty}  G^{(\mathtt{max})}_{BS} (0, \phi_\mathtt{m}(r_c)) f_{R_C} (r_c) dr_c}{G^{(\mathtt{max})}_{BS}\big(0, \phi_\mathtt{m}\big( \tfrac{1}{\sqrt{\pi \lambda_{BS}}} \big) \big)}.
\end{align*}
Note that $\eta \rightarrow 1$ when $h_{BS} \sqrt{\pi\lambda_{BS}} \rightarrow 0$ due to BS gain saturation. 

\begin{figure}[t]
	\centering
	\includegraphics[width=3.4in]{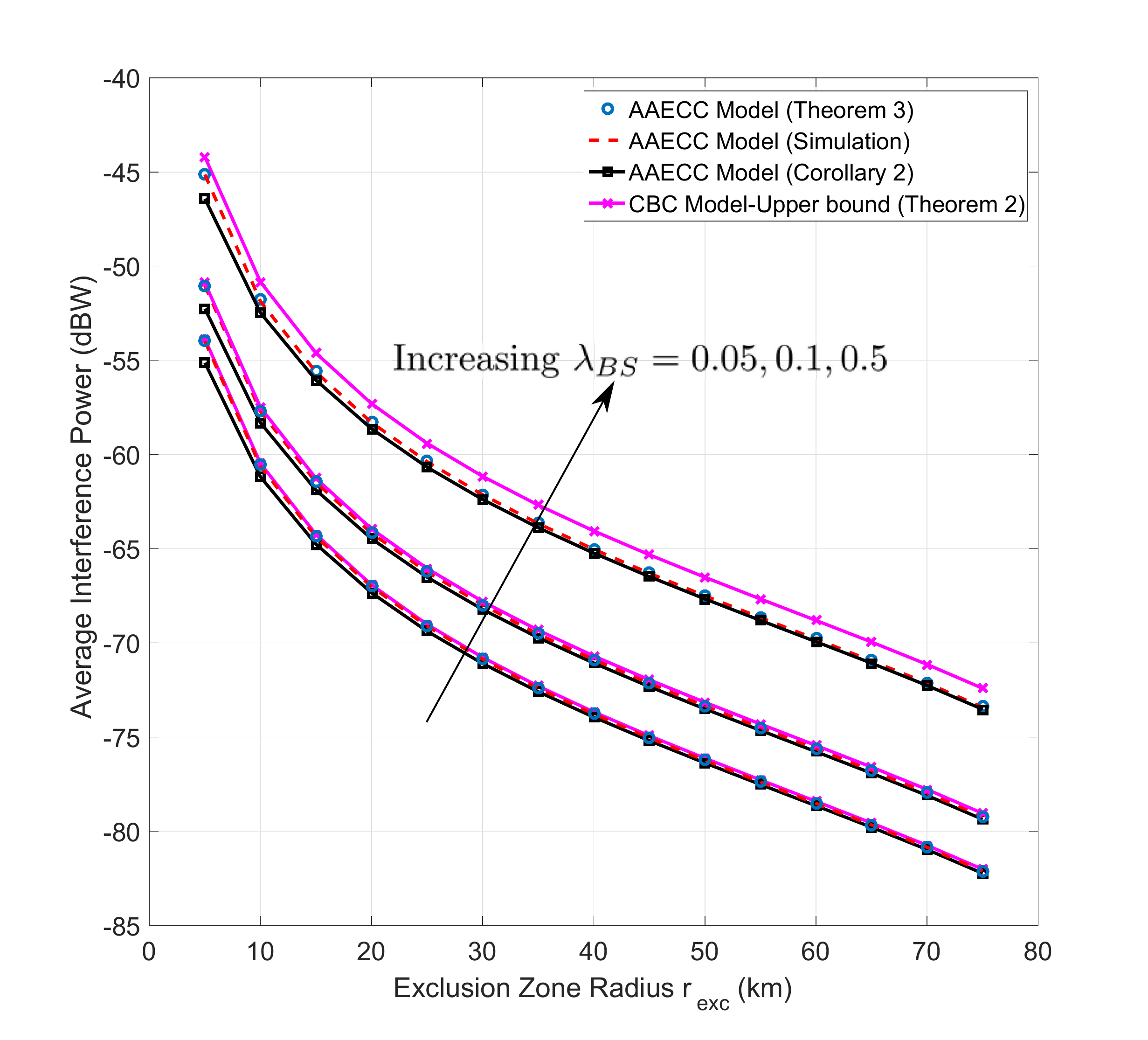}\\
	[-2ex]
	\caption{Worst-case average interference power at the radar due to downlink massive MIMO transmissions, as a function of exclusion zone radius for different $\lambda_{BS}$ (km$^{-2}$). \\[-6ex]}
	\label{Fig_WorstCaseInt_all_Globecom}
\end{figure}

\section{Numerical Results}\label{Sec_Numerical_Results}
In this section, we validate the worst-case interference expressions using Monte-Carlo simulations. We consider a typical radar operating at $f_c = 5$ GHz, located at the origin equipped with a $40 \times 40$ URA, mounted at a height of $h_{rad} = 20$ m. The radar is assumed to be scanning the region at $(\theta, \phi) = (60^\circ, -10^\circ )$ above the horizon. We consider a finite massive MIMO network in a circular region around the origin with a radius of $100$ km. The BSs are distributed as a PPP, with varying intensities. Each massive MIMO BS is co-channel with the radar, and is equipped with a $10 \times 10$ URA deployed at a height of $h_{BS} = 50$ m. The circular exclusion zone around the radar has a \textit{minimum radius} of $r^{(\mathtt{min})}_\mathtt{exc} = 5$ km. The boresight of each massive MIMO BS URA is aligned along the direction of the radar ($\theta_k=0$ in the LCS). 
In each cell, the massive MIMO BS transmits a total power of $P_{BS} = 1$ W, equally allocated among co-scheduled UEs from $K=4$ clusters with mutually disjoint angular support. To model the pathloss in the downlink and the BS to radar channels, we assume the 3GPP 3D Urban Macro (3D UMa) LoS pathloss model \cite{3GPP5GNR_ChanModels}
\begin{align*}
PL(d) = P(h_{BS}, h_{rad}) + 20 \log_{10}(f_c) + 40 \log_{10}(d) \quad \text{(dB)} \nonumber \\
P(h_{BS}, h_{rad}) = 28 - 9 \log_{10}((h_{BS} - h_{rad})^2) \quad \text{(dB)},
\end{align*}
where $f_c$ (GHz), and $d$ (m).

\begin{table}[t]
	\renewcommand{\arraystretch}{1.0}
	\caption{Approximate Values of $\eta$}
	\label{Tab_Ratio_Vals}
	\centering
	\footnotesize
	\begin{tabular}{|c|c|c|c|c|c|c|}
		\hline
		$h_{BS}\sqrt{\pi \lambda_{BS}}$ & 0.0089 & 0.0198 & 0.028 & 0.044 & 0.0886 & 0.1253\\
		\hline
		$\eta$ & 1.004 & 1.022 & 1.045 & 1.254 & 1.608 & 2.905 \\
		\hline
	\end{tabular} \\
	[-3ex]
\end{table}

Fig. \ref{Fig_WorstCaseInt_all_Globecom} plots the average interference power derived in this paper under different cell models, as a function of exclusion zone radius for different BS intensities. We observe that the upper bound is remarkably tight, especially for low values of $\lambda_{BS} \leq 0.1$. For reference, we also plot the approximate average interference power from corollary \ref{Corollary_Circum_Model}. It can be seen that its accuracy improves as $r_\mathtt{exc}$ increases, due to the accuracy of the underlying approximations regarding $d$ and $\phi_{r,L}$. The approximately linear scaling of average interference power with $\lambda_{BS}$ can also be observed.

From Fig. \ref{Fig_WorstCaseInt_all_Globecom}, we observe that the ratio of average interference powers $\eta$ is approximately constant, and is tabulated for the elevation parameter $h_{BS} \sqrt{\pi \lambda_{BS}}$ in Table \ref{Tab_Ratio_Vals}. For typical 3GPP UMa deployment scenarios $h_{BS}/r_{ISD}=0.05$ \cite{3GPP5GNR_ChanModels}, where $r_{ISD}$ is the inter-site distance. The equivalent $h_{BS}\sqrt{\pi \lambda_{BS}}=0.095$, for which $ \eta < 3$ dB. Hence, the upper bound of the average interference power is remarkably tight for typical 3GPP deployment parameters.

\section{Conclusion and Proposed Work}\label{Sec_Conc_Prop_Work}
In this paper, we presented a novel construction based on modeling a Poisson Voronoi cell by its circumcircle, to analyze the worst-case average interference at a typical radar due to a co-channel massive MIMO downlink in a high $K_R$ Rician channel. The proposed model accounted for elevation beamforming capabilities of the massive MIMO BS and the radar, and uncovered the relationship between the BS density and the worst-case BS transmit beamforming gain. We also proposed and analyzed the nominal average interference using an alternate, simpler model, where each cell is replaced by a circle of area equal to the average area of a typical cell. 
Finally, we provided useful insights regarding the worst-case exclusion zone radius, scaling of interference power with BS density, and the approximate gap between the worst-case and nominal average interference power. Our analysis was validated using Monte-Carlo simulations, and we demonstrated that the upper bound using the circumcircle-based model is remarkably tight for realistic deployment parameters. As cellular deployment progresses towards 5G NR with elevation beamforming capabilties in shared spectrum, the analysis techniques presented in this work helps system designers establish important baselines regarding worst-case average interference during simulation and empirical evaluation of radar-cellular coexistence scenarios. \\[-3ex]

\appendix
\subsection{Proof of Lemma \ref{Lemma_Monotonic_BF_Gain}}\label{App1_Proof_BFGain_UpBound}
The steering vector of a $N_{az} \times N_{el}$ URA is $\mathbf{a}(\theta,\phi)=\mathbf{a}_{az}(\theta, \phi) \otimes \mathbf{a}_{el}(\phi)$, where $\otimes$ is the Kronecker product. For $\tfrac{\lambda}{2}$-spacing,
\begin{align*}
\mathbf{a}_{az}(\theta,\phi) & = [1\ e^{-j \pi \sin \theta \cos \phi}\cdots e^{-j \pi (N_{az} - 1)\sin \theta \cos \phi}] \in \mathbb{C}^{N_{az}}, \\
\mathbf{a}_{el}(\phi) & = [1\ e^{-j\pi \sin \phi}\cdots e^{-j\pi (N_{el} - 1)\sin \phi}] \in \mathbb{C}^{N_{el}}.
\end{align*}
Using the properties of the Kronecker product, we get
$G_{BS}(\theta,\phi,\theta_k, \phi_k) =  \tfrac{|\mathbf{a}^H(\theta, \phi) \mathbf{a}(\theta_k, \phi_k)|^2}{N_{az} N_{el}}= \tfrac{|\mathbf{a}^H_{az}(\theta, \phi) \mathbf{a}_{az}(\theta_k, \phi_k) |^2}{N_{az}} \cdot  \tfrac{|\mathbf{a}^H_{el}(\phi) \mathbf{a}_{el}(\phi_k)|^2}{N_{el}}$.
After expanding and simplifying, we get
\begin{align*}
G_{BS}(\theta,\phi,\theta_k, \phi_k) & = \tfrac{\sin^2 \big(\tfrac{\pi}{2} N_{az}(\sin\theta \cos \phi - \sin \theta_k \cos \phi_k) \big)}{N_{az}\sin^2 \big(\tfrac{\pi}{2} (\sin\theta \cos \phi - \sin \theta_k \cos \phi_k) \big)} \times \nonumber \\
& \ \ \ \tfrac{\sin^2 \big(\tfrac{\pi}{2} N_{el}(\sin \phi - \sin \phi_k) \big)}{N_{el}\sin^2 \big(\tfrac{\pi}{2} (\sin \phi - \sin \phi_k) \big)} \leq N_{az} N_{al}.
\end{align*}
Since $\tfrac{\sin^2(Na)}{\sin^2a} \leq N^2 \text{ for } a \in \mathbb{R}$, the universal upper bound is obtained above, and is achieved when $a = 0$. To obtain a tighter bound $G^{(\mathtt{max})}_{BS}$ defined in (\ref{BFGain_Tight_UpperBound}), we consider the following.
\subsubsection{Case 1} If $\phi_\mathtt{m} \leq \phi \leq \tfrac{\pi}{2}$, $G_{BS}(\theta, \phi,\theta_k, \phi_k)$ is maximized by $\phi_k = \phi$, $\theta_k = \theta$, yielding $G^{(\mathtt{max})}_{BS}(\phi, \phi_\mathtt{m}) = N_{az} N_{el}$.

\subsubsection{Case 2}
By upper bounding the \textit{azimuth beamforming gain} in $G_{BS}(\cdot)$, we get $G_{BS}(\theta,\phi,\theta_k, \phi_k) \leq N_{az} \tfrac{\sin^2 \big(\tfrac{\pi}{2} N_{el}(\sin \phi - \sin \phi_k) \big)}{N_{el}\sin^2 \big(\tfrac{\pi}{2} (\sin \phi - \sin \phi_k) \big)}$.
The RHS monotonically decreases w.r.t. $\phi_k$ when $0 \leq \sin \phi_\mathtt{m} \leq \tfrac{1 + N_{el} \sin \phi}{N_{el}} \leq \tfrac{\pi}{2}$ and hence, the upper bound will be given by $G^{(\mathtt{max})}_{BS}(\phi, \phi_\mathtt{m}) = N_{az} \tfrac{\sin^2 \big(\tfrac{\pi}{2} N_{el}(\sin \phi - \sin \phi_\mathtt{m}) \big)}{N_{el}\sin^2 \big(\tfrac{\pi}{2} (\sin \phi - \sin \phi_\mathtt{m}) \big)}$.

\subsubsection{Case 3}
If $\tfrac{1 + N_{el} \sin \phi}{N_{el}} \leq \sin \phi_\mathtt{m}$, the numerator of $G^{(\mathtt{max})}_{BS}(\cdot)$ in case 2 can be upper bounded as $\sin^2(b) \leq 1\  \forall \ b \in \mathbb{R}$, resulting in a monotonically decreasing function of $\phi_\mathtt{m}$. Hence, $G^{(\mathtt{max})}_{BS}(\phi, \phi_\mathtt{m}) = \tfrac{N_{az}}{N_{el}\sin^2 \big(\tfrac{\pi}{2} (\sin \phi - \sin \phi_\mathtt{m}) \big)}$.
\begin{remark}
The upper bound on the beamforming gain is independent of the azimuth angle, since the maximum azimuth beamforming gain can be upper bounded by $N_{az}$. Therefore for the sake of simplicity, we consider that the boresight of each BS is aligned along the direction of the radar, which corresponds to $\theta= 0^\circ$ as discussed in Assumption \ref{BoresightAssumption}. \\[-4ex]
\end{remark}
\subsection{Proof of Theorem \ref{Circum_rad_Model}}\label{App2_Proof_Worst_Case_Int}
Since the radar and massive MIMO BSs are independent PPPs $\mathbf{\Phi_{rad}}$ and $\mathbf{\Phi_{BS}}$ of intensities $\lambda_{rad}$ and $\lambda_{BS}$ respectively with $\lambda_{rad} \lll \lambda_{BS}$, the worst-case average interference at the typical radar is given by Campbell's theorem using
\begin{align*}
\bar{I}_{rad, \mathtt{cbc}} & = \mathbb{E} \Big[ \mathbb{E} \Big[\sum_{\mathbf{X} \in \mathbf{\Phi_{BS}} \setminus \mathbf{\Phi_{exc}} } \{ I^{(\mathtt{w})}_{rad} (\mathbf{X_i}, h_{BS}, h_{rad})|r_c \}\Big]\Big| r_c \Big] \nonumber \\
& = \mathbb{E} \Big[ \int_{\mathbf{x} \in \mathbb{R}^2 \setminus \mathbf{\Phi_{exc}} } \lambda_{BS}\{ I^{(\mathtt{w})}_{rad} (\mathbf{x}, h_{BS}, h_{rad})|r_c \} d \mathbf{x} \Big| r_c \Big],
\end{align*}
where $\mathbf{x} = [r\cos \theta_{r,L}\ r\sin \theta_{r,L}]$, $\mathbf{\Phi_{exc}} = \{r|r \leq r_\mathtt{exc} \}$ denotes the circular exclusion zone, and $r_c$ is the cell radius that determines $G^{(\mathtt{max})}_{BS}(\phi, \phi_\mathtt{m})$ in equation (\ref{BFGain_Tight_UpperBound}). 
Substituting (\ref{WorstCaseAvgIntPow_SingleBS}) above, noting that  
$\phi_{r,L}(r)=-\phi_{t,L}(r)=\tan^{-1} \big( \tfrac{h_{rad} - h_{BS}}{r} \big)$, and converting to polar coordinates we get 
\begin{align}
\bar{I}_{rad,\mathtt{cbc}} & = \mathbb{E} \Big[ \int_{r_\mathtt{exc}}^{\infty} \int_{-\tfrac{\pi}{2}}^{\tfrac{\pi}{2}} \lambda_{BS} \beta(d) G_{rad}(\theta_{r,L}, \phi_{r,L}(r)) \cdot \nonumber \\
& \ G^{(\mathtt{max})}_{BS}(\phi_{t,L}(r), \phi_\mathtt{m}(r_c)) \tfrac{ P_{BS}}{K}  r dr d\theta_{r,L} \Big| r_c \Big],
\end{align}
where $d = \sqrt{r^2 + (h_{BS} - h_{rad})^2}$, and $\beta(d) = PL(r_0) d^{-\alpha}$ is the pathloss model. Using these and integrating over $r_c \sim f_{R_c} (r_c)$, we get the desired result. 
\bibliographystyle{IEEEtran}
\bibliography{prelim_references}
\ifCLASSOPTIONcaptionsoff
  \newpage
\fi
\end{document}